\documentclass[onecolumn,accepted=2020-07-29,a4paper]{quantumarticle}
\pdfoutput=1

\usepackage[numbers,sort&compress]{natbib}
\usepackage{graphicx}
\usepackage{dcolumn}
\usepackage{dsfont}
\usepackage{enumitem}
\usepackage{bm}
\usepackage{amsmath,amssymb,amsthm}
\usepackage{braket}
\usepackage{cancel}

\usepackage{algpseudocode,algorithm,algorithmicx}

\usepackage{xcolor}
\usepackage{url}
\usepackage{microtype}
\usepackage{soul}

\usepackage{changepage}

\renewcommand{\vec}[1]{\ensuremath{\boldsymbol{#1}}}
\newcommand{\vt}{\vec{\theta}}
\newcommand{\ml}{\mathcal{L}}
\newcommand{\nl}{\lvert\lvert}
\newcommand{\nr}{\rvert\rvert}
\newcommand{\hc}{\operatorname{h.\!c.}}

\newtheorem{theorem}{Theorem}
\newtheorem{example}{Example}

\newtheorem{lemma}[theorem]{Lemma}

\newtheorem{observation}{Observation}

\newtheorem{definition}{Definition}

\begin{document}

\title{Stochastic gradient descent for hybrid quantum-classical optimization}

\author{Ryan Sweke}
\thanks{These authors contributed equally to this work.}
\affiliation{\mbox{Dahlem Center for Complex Quantum Systems, Freie Universit\"{a}t Berlin, 14195 Berlin, Germany}}
\orcid{0000-0002-6202-8864}

\author{Frederik Wilde}
\thanks{These authors contributed equally to this work.}
\affiliation{\mbox{Dahlem Center for Complex Quantum Systems, Freie Universit\"{a}t Berlin, 14195 Berlin, Germany}}
\orcid{0000-0002-6224-1964}

\author{Johannes Jakob Meyer}
\affiliation{\mbox{Dahlem Center for Complex Quantum Systems, Freie Universit\"{a}t Berlin, 14195 Berlin, Germany}}
\orcid{0000-0003-1533-8015}

\author{Maria Schuld}
\affiliation{\mbox{Xanadu, 777  Bay  Street,  Toronto,  Ontario,  Canada}}
\affiliation{\mbox{Quantum Research Group, University of KwaZulu-Natal, 4000 Durban, South Africa}}
\orcid{0000-0001-8626-168X}
\
\author{\\Paul K. Faehrmann}
\affiliation{\mbox{Dahlem Center for Complex Quantum Systems, Freie Universit\"{a}t Berlin, 14195 Berlin, Germany}}
\orcid{0000-0002-8706-1732}

\author{Barth\'{e}l\'{e}my Meynard-Piganeau}
\affiliation{\mbox{Department of Physics, Ecole Polytechnique, Palaiseau, France}}

\author{Jens Eisert}
\affiliation{\mbox{Dahlem Center for Complex Quantum Systems, Freie Universit\"{a}t Berlin, 14195 Berlin, Germany}}
\affiliation{\mbox{Helmholtz Center Berlin, 14109 Berlin, Germany}}
\affiliation{\mbox{Department of Mathematics and Computer Science, Freie Universit{\"a}t Berlin, D-14195 Berlin}}
\orcid{0000-0003-3033-1292}


\begin{abstract}
Within the context of hybrid quantum-classical optimization, gradient descent based optimizers typically require the evaluation of expectation values with respect to the outcome of parameterized quantum circuits. In this work, we explore the consequences of the prior observation that estimation of these quantities on quantum hardware results in a form of \emph{stochastic} gradient descent optimization. We formalize this notion, which allows us to show that in many relevant cases, including VQE, QAOA and certain quantum classifiers, estimating expectation values with $k$ measurement outcomes results in optimization algorithms whose convergence properties can be rigorously well understood, for any value of $k$. In fact, even using single measurement outcomes for the estimation of expectation values is sufficient. Moreover, in many settings the required gradients can be expressed as linear combinations of expectation values -- originating, e.g., from a sum over local terms of a Hamiltonian, a parameter shift rule, or a sum over data-set instances -- and we show that in these cases $k$-shot expectation value estimation can be combined with sampling over terms of the linear combination, to obtain ``doubly stochastic'' gradient descent optimizers. For all algorithms we prove convergence guarantees, providing a framework for the derivation of rigorous optimization results in the context of near-term quantum devices. Additionally, we explore numerically these methods on benchmark VQE, QAOA and quantum-enhanced machine learning tasks and show that treating the stochastic settings as hyper-parameters allows for state-of-the-art results with significantly fewer circuit executions and measurements.  
\end{abstract}

\maketitle

\section{Introduction}

\noindent Hybrid quantum-classical optimization with parameterized quantum circuits \cite{Hybrids} provides a promising approach for understanding and exploiting the potential of 
\emph{noisy intermediate-scale quantum (NISQ)} devices \cite{Preskill2018quantumcomputingin}. In this approach, which includes large classes
of well studied methods like \textit{variational 
quantum eigensolvers} (VQE) \cite{peruzzo2014variational}, the \textit{quantum approximate optimization algorithm} (QAOA) 
\cite{farhi2014quantum} and \textit{quantum classifiers} \cite{SchuldClassification, FarhiClassification}, a classical 
optimization scheme is utilized to update 
the parameters of a hybrid quantum-classical 
model, and developing and understanding optimization techniques tailored 
to this setting is of natural importance \cite{benedetti2019parameterized}. 
While a variety of gradient-free optimization methods have been proposed and studied \cite{zhu2018training}, in this work we will be concerned with gradient descent type optimizers. This focus is motivated by the fact that for highly over-parameterized classical models (such as modern neural networks) gradient-based optimization offers many advantages over gradient-free methods \cite{Goodfellow-et-al-2016}, and as a result developing and analyzing such methods designed specifically for the quantum-classical setting is crucial, particularly as the computational power of available near-term quantum devices increases.

In this hybrid quantum-classical setting, since part of the hybrid computation is executed by a quantum circuit, the optimized loss is generically a function of the expectation values of quantum observables. While zeroth-order methods can be immediately applied to obtain an approximation to the gradient from evaluations of the loss function \cite{kandala2017hardware,harrow_napp}, there also now exist a variety of strategies to directly evaluate the gradient, either via distinct quantum algorithms \cite{gilyen2019optimizing,verdon2018universal}, or through the measurement of suitable observables with respect to states generated by the parameterized model \cite{bergholm2018pennylane, mitarai2018quantum,schuld2019evaluating}. As an example of the latter, it has recently been shown that the partial derivatives required for gradient descent can be exactly expressed as linear combinations of the same expectation values appearing in the loss function, but with respect to states generated from a shift in the tunable parameters of the parameterized quantum circuit -- a strategy called the `parameter shift rule' \cite{mitarai2018quantum, schuld2019evaluating}. However, as an infinite number of measurements is required for the exact evaluation of a particular expectation value, it is not possible to implement \emph{exact} gradient descent in this hybrid quantum-classical setting, even when the exact gradient can be written as a function of expectation values. As a result, previous approaches have typically used large numbers of measurements to estimate expectation values as accurately as possible \cite{Kandala,leyton2019robust,havlivcek2019supervised}, and the necessary circuit repetitions represent a significant overhead for the implementation of gradient based optimizers.

Recently however it has been observed that using a finite number of measurements for the evaluation of gradients effectively results in the implementation of \textit{stochastic} gradient descent \cite{harrow_napp}. Stochastic gradient descent (SGD)
optimization works by replacing the exact partial derivative at each optimization step with an \emph{estimator} of the partial derivative, and when the estimator is unbiased it is often possible to prove rigorous convergence guarantees in appropriately simplified settings \cite{Shalev,karimi2016linear}. Additionally, SGD is the method of choice for the vast majority of large-scale machine learning models, where it has been found to offer advantages over exact gradient descent, such as faster evaluation of the gradient, faster convergence, and avoidance of local minima \cite{bottou2010large, bottou2008tradeoffs, kleinberg2018alternative,ruder2016overview}. Given the natural connection between SGD and hybrid quantum-classical optimization, Harrow and Napp \cite{harrow_napp} provide explicit algorithms for the construction of unbiased estimators for the gradient of typical loss functions arising in the context of hybrid quantum-classical optimization, both by exploiting single shot measurements for the estimation of expectation values, and by importance sampling single terms of linear combinations of expectation values, which arise naturally from the use of local Hamiltonians to construct loss functions. Additionally, using these estimators they are able to generalize a variety of existing upper and lower SGD convergence bounds into the hybrid quantum-classical setting. These bounds then allow them to construct a simple class of optimization problems for which it can be proven that certain first-order hybrid quantum-classical SGD methods converge substantially faster than any zeroth-order method.

 In light of these previous results and observations, we explore in this work, both theoretically and numerically, the applicability of these techniques, and a variety of heuristic extensions, in multiple concrete settings. As previously observed by Harrow and Napp \cite{harrow_napp}, one such setting is in the context of VQE, where the cost function is typically a linear combination of expectation values of local Hamiltonian terms, and as such an unbiased estimator of the gradient can be constructed by sampling commuting subsets of these local terms in each optimization step. However, multiple other hybrid quantum-classical optimization settings also facilitate the use of similar strategies. For example, the parameter shift rule  re-expresses partial derivatives as linear combinations of expectation values with respect to shifted circuit parameters, and unbiased estimators to these derivatives can therefore similarly be easily obtained by sampling subsets of terms in each optimization step. This insight can be particularly valuable in the context of continuous variable quantum circuits, where an infinite number of parameter shift terms may be required \cite{schuld2019evaluating}. Additionally, in many data driven settings, such as quantum classifiers for example, loss functions are constructed as sums over data-set instances, and sampling subsets of instances in each optimization step is a well known classical strategy -- known as mini-batch SGD -- which has already been adopted from classical machine learning \cite{SchuldClassification}. Of particular interest is the fact that these ``sampling from linear combination" type SGD schemes can all be combined with each other, and with efficient expectation value estimation, resulting in optimization methods that we refer to as \emph{``doubly stochastic"} gradient descent. This is in the same spirit as similarly motivated techniques for kernel methods, in which scalable doubly stochastic estimators to the gradient are constructed through the combination of two distinct unbiased approximations to the gradient \cite{dai2014scalable,ds2}.

Similarly to the kernel method setting, the concrete doubly stochastic SGD optimization schemes we formulate here may provide large efficiency gains over existing approaches, crucial for the implementation of variational algorithms on existing devices. For example, if a quantum classifier trained with $D$ data points has a cost function consisting of $M$ observables, which each need $K$ ``parameter shift'' terms to extract the analytical gradient, and each expectation is an average of $N$ measurement results, the most extreme version of stochastic gradient descent uses only \textit{a single measurement sample} to estimate the gradient, saving $\mathcal{O}(DMKN)$ measurements in each optimization step. Additionally, from a theoretical perspective, as has been previously mentioned, formalizing these notions allows one to prove convergence guarantees, in suitably restricted settings, for both existing methods and newly proposed SGD optimizers, thereby placing gradient-based hybrid quantum-classical optimization within a rigorous theoretical framework, which can be exploited both to guide the development and analysis of new methods, and to facilitate comparison of existing methods~\cite{harrow_napp}.

As suggested by previously obtained convergence bounds, and confirmed here in numerical simulations of various benchmark tasks for which these convergence bounds may not be directly applicable, the increased variance of such extreme-case estimators typically results in more optimization steps being required for convergence, and potentially non-optimal final solutions. However, while more optimization steps may be required, the enhanced efficiency of each each optimization step can result in significant overall savings in the number of circuit executions and measurements which are necessary to achieve convergence. Furthermore, in practice, a promising strategy is to treat the various SGD parameters -- such as learning rate, number of measurement shots, and number of linear combination terms sampled -- as hyper-parameters which are adjusted appropriately through the course of optimization. As we see from numerical simulations, basic implementations of this strategy suggest one is able to converge to highly accurate solutions, while retaining the efficiency gains of SGD. In fact, very recently the authors of Ref.~\cite{kubler2019adaptive} have also observed that the number of measurement shots used for expectation value estimation can be treated as a hyper-parameter within an SGD framework, and suggested multiple sophisticated heuristic strategies, inspired by state-of-the-art methods from classical optimization, for varying this parameter through the course of optimization. These strategies may well also be applicable within the context of the doubly stochastic gradient descent schemes we consider here, and provide a natural avenue of investigation for future work.

This work is structured as follows: We present the setting and basic idea in Section \ref{s:setting}. We then develop a more general framework and body of concrete examples by exploring in detail the settings of VQE, QAOA and quantum classifiers in Sections \ref{s:vqe}, \ref{s:qaoa} and \ref{s:mse} respectively. In Section \ref{s:extensions}, we discuss certain extensions beyond the settings we consider here, before proceeding in Section~\ref{s:convergence} to prove rigorous convergence guarantees, complimentary to previously obtained bounds, for all considered optimizers. Given this formal framework, we present in Section \ref{s:results} multiple numerical experiments and benchmarks, before concluding in Section \ref{s:conclusion} with both a discussion and outlook.

\section{Setting and Idea}\label{s:setting}

\noindent Given a model parameterized by $\vt \in \mathbb{R}^d$, and some loss function 
$\ml: \mathbb{R}^d\rightarrow \mathbb{R}$, stochastic gradient descent algorithms can be viewed as optimization algorithms in which the exact gradient descent update rule

\begin{equation}
    \vt^{(t+1)} = \vt^{(t)} - \alpha \nabla \ml(\vt^{(t)}),
\end{equation}
is replaced with a stochastic update rule of the form
\begin{equation}\label{e:sgd_update}
    \vt^{(t+1)} = \vt^{(t)} - \alpha g^{(t)}(\vt^{(t)}),
\end{equation}
where $\{g^{(t)}(\vt)\}$ is a sequence of random variables -- estimators of the gradient -- which defines the particular algorithm. Perhaps counter-intuitively, this may offer multiple advantages: Stochasticity can potentially aid in the avoidance of local minima and saddle points \cite{bottou1991stochastic}, and if well designed, $g^{(t)}(\vt)$ can be much more efficiently evaluated than the full gradient $\nabla\ml$ \cite{zinkevich2010parallelized,recht2011hogwild}. While such algorithms are heavily used and studied as heuristics, there is also an increasing body of work seeking to understand and prove their convergence properties. While we postpone a detailed discussion of convergence theorems to Section \ref{s:convergence}, we note that a fundamental property required for obtaining convergence guarantees is that the estimators $\{g^{(t)}(\vt)\}$ are \emph{unbiased} -- i.e.,  
\begin{equation}
\mathbb{E}[g^{(t)}(\vt)] = \nabla \ml(\vt)
\end{equation}
for all $t$. In this work we will be concerned with the development of stochastic gradient descent algorithms within the setting of hybrid quantum-classical optimization. In particular, we will consider loss functions ${\cal L}$
of the form
\begin{equation}
    \ml(\vt) = \ml(\vt, \langle O_1 \rangle_{\vt}, \ldots, \langle O_M \rangle_{\vt} ),
\end{equation}
where $\langle O_i \rangle_{\vt}$ is the expectation value of an observable $O_i$ with respect to the outcome of a parameterized quantum circuit $U(\vt)$ acting on an initial state vector $|\vec{0}\rangle$, i.e.
\begin{equation}
    \langle O_i \rangle_{\vt} = \langle \vec{0}|U^{\dagger}(\vt)O_iU(\vt)|\vec{0}\rangle.
\end{equation}
In the following sections, we will examine in detail specific loss functions relevant for current applications, however in order to present some of the basic ideas, let us consider as a first example the simple loss function 
\begin{equation}
    \mathcal{L}(\vt) = \langle O\rangle_{\vt},
\end{equation}
for some observable $O$, which we assume can be readily measured. In order to utilize gradient descent optimization it is necessary to obtain expressions for all partial derivatives $\partial\mathcal{L}(\vt)/\partial \theta_i$. While zeroth-order methods, such as finite differences, could be used to estimate these partial derivatives from evaluations of the loss function, we will in this work focus on first-order methods, in which one calculates these partial derivatives directly, without necessarily evaluating the loss function as an intermediate step. In particular, for many settings of interest, it has recently been shown that a \emph{parameter shift rule} can be derived, via which all partial derivatives can be expressed as linear combinations of the same expectation value, but with respect to slightly shifted circuit parameters \cite{mitarai2018quantum, schuld2019evaluating}. In this work we primarily restrict ourselves to settings in which such a rule can be derived\footnote{All results still hold if the parameter shift rule has to be replaced with simple numerical differentiation. However, noise inherent to the quantum device may be much more detrimental in this case.}, which we formalize via the following definition:

\begin{definition}[Parameter shift rule]\label{def:pshift}
A quantum circuit $U(\vt)$ parameterized by $\vt \in \mathbb{R}^d$, satisfies a $K$-term parameter shift rule if for all observables $O$ and for all parameters $\theta_i$, with $i \in [1,\ldots, d]$, there exist some $\{\gamma_{k,i}\}$ and $\{c_{k,i}\}$ such that
\begin{equation}\label{e:parameter_shift}
    \frac{\partial }{\partial \theta_i}\langle O\rangle_{\vt} =\sum_{k = 1}^K \gamma_{k,i} \langle O\rangle_{\vt_{k,i}},
\end{equation}
where $\vt_{k,i} =  \vt + c_{k,i}\vec{e}_i $, with $\vec{e}_i$ denoting a unit vector in the $i$'th direction.
\end{definition}
\noindent In order to facilitate intuition, before continuing let us consider the following example:
\begin{example}[Parameter shift rule for single qubit generators \cite{schuld2019evaluating}]\label{example:ps} Consider a parameterized quantum circuit of the form $U(\vt) = \prod_{i = 1}^d e^{-i\theta_iG_i}$, where each $G_i$ is a single qubit Hermitian operator with eigenvalues $\pm r_i$. In this case one finds that
\begin{equation}
    \frac{\partial }{\partial \theta_i}\langle O\rangle_{\vt} = r_i \langle O\rangle_{\vt + \left(\frac{\pi}{4r_i}\right)\vec{e}_i} -  r_i\langle O\rangle_{\vt - \left(\frac{\pi}{4r_i}\right)\vec{e}_i}
\end{equation}
\end{example}

\noindent Now, under the assumption of a $K$-term parameter shift rule for circuit ansatz $U(\vt)$, we see that all partial derivatives $\partial\mathcal{L}(\vt)/\partial \theta_i$ of our example loss function are the linear combination of $K$ expectation values, as per Eq.~\eqref{e:parameter_shift}. As expectation values cannot be evaluated exactly on quantum devices, we see already at this stage that exact gradient descent is not possible in this setting. However, as we show now, estimating expectation values via a finite number of measurement outcomes leads to unbiased estimators for the gradient, and therefore to well motivated stochastic gradient descent schemes. In particular, let us start with the following general definition for the $n$-shot sample mean estimator of an expectation value:

\begin{definition}[$n$-sample mean estimator]\label{d:sample_mean}
Given a parameterized quantum circuit $U(\vt)$ (with $\vt \in \mathbb{R}^d$) and an observable $O$ we define $\tilde{o}^{(n)}(\vt)$ as the $n$-sample mean estimator of $\langle O \rangle_{\vt}$ -- i.e., the estimator of $\langle O \rangle_{\vt}$ obtained by averaging the results of $n$ measurements of the observable $O$ on the state vector $U(\vt)|0\rangle$.
\end{definition}
\noindent Note that by construction we have that
\begin{equation}
\mathbb{E}[\tilde{o}^{(n)}(\vt)] = \langle O \rangle_{\vt},
\end{equation}
for all $n$, and that while all $n$-sample mean estimators have the same expectation value the variance of the estimator decreases with increasing $n$. Given this unbiased estimator for a single expectation value, it now follows straightforwardly that such estimators can be linearly combined to obtain unbiased estimators for the required partial derivatives, i.e.,
\begin{equation}
\mathbb{E}[\sum_{k = 1}^K\gamma_{k,i}\tilde{o}^{(n)}(\vt_{k,i})] = \sum_{k = 1}^K\gamma_{k,i}\mathbb{E}[\tilde{o}^{(n)}(\vt_{k,i})]  = \sum_{k = 1}^K\gamma_{k,i}\langle O\rangle_{\vt_{k,i}} = \frac{\partial }{\partial \theta_i}\langle O\rangle_{\vt},
\end{equation}
and therefore the estimator $g_i(\vt) = \sum_{k}\gamma_{k,i}\tilde{o}^{(n)}(\vt_{k,i})$ is an unbiased estimator for $\partial\mathcal{L}(\vt)/\partial \theta_i$, which requires $nK$ measurements to construct. Using this estimator we can then define a well founded SGD optimization algorithm via the update rule
\begin{equation}
    \theta_i^{(t+1)} = \theta_i^{(t)} - \alpha g_i(\vt^{(t)}),
\end{equation}
where we have implicitly defined $g^{(t)}_i(\vt^{(t)}) := g_i(\vt^{(t)})$ for all $t$ - i.e. the same estimator is used for all update steps. Importantly, note that this algorithm is valid even in the extreme case of $n=1$. In fact, as exact evaluation of expectation values is not possible, many previous gradient-based approaches to loss functions such as the one considered here have been large $n$ instances of such an SGD algorithm \cite{Kandala,leyton2019robust,havlivcek2019supervised}. It is, 
however, possible to go further, and define a ``doubly stochastic" gradient descent optimizer by not only estimating the expectation values via $n$ measurements, but also sampling subsets of terms from the linear combination in each optimization step, and applying appropriate correction weights. In the extreme case of sampling only single terms, such an optimizer requires only $n$ measurements per optimization step, as opposed to $nK$ measurements. 

In order to formalize this notion, we will denote the set of non-negative integers less than or equal to $k$ as $[k] := \{1,\ldots,k\}$, and let us assume for now that the estimator $\tilde{o}^{(n)}(\vt_{k,i})$ for each term $\langle O\rangle_{\vt_{k,i}}$ in the linear combination of Eq.~\eqref{e:parameter_shift} is a discrete random variable which can take $n(k,i)$ values $\{\lambda^{(k,i)}_j\, |\, j \in [n_{k,i}]\}$, with respective probabilities $\mathrm{prob}(\lambda^{(k,i)}_j)$. We now note that the random variable $g_i(\vt)$ taking values $\{(K\gamma_{k,i})\lambda^{(k,i)}_j\, |\,  j \in [n(k,i)], k\in [K]\}$ with probabilities $(1/K)\mathrm{prob}(\lambda^{(k,i)}_j)$ is an unbiased estimator for the linear combination of Eq.~\eqref{e:parameter_shift}, i.e.,
\begin{align}
    \mathbb{E}[g_i(\vt)] &= \sum_{k = 1}^K\sum_{j = 1}^{n(k,i)} [(1/K)\mathrm{prob}(\lambda^{(k,i)}_j)][(K\gamma_{k,i})\lambda^{(k,i)}_j]\\
    &= \sum_{k = 1}^K\gamma_{k,i}\sum_{j = 1}^{n(k,i)} \mathrm{prob}(\lambda^{(k,i)}_j)\lambda^{(k,i)}_j\\
    &= \sum_{k = 1}^K\gamma_{k,i}\mathbb{E}[\tilde{o}^{(n)}(\vt_{k,i})]\\
    &= \sum_{k = 1}^K\gamma_{k,i}\langle O\rangle_{\vt_{k,i}}\\
    &= \frac{\partial }{\partial \theta_i}\langle O\rangle_{\vt}.
\end{align}
The important thing to note is that the random variable $g_i(\vt)$ can be sampled by first sampling a term of the linear combination uniformly at random -- i.e., drawing $k$ from $[1,\ldots,K]$ with $\mathrm{prob}(k) = 1/K$ -- and then sampling the random variable $\tilde{o}^{(n)}(\vt_{k,i})$ (requiring $n$ measurements) before applying the correction factor $K\gamma_{k,i}$ to the outcome. Alternatively, one can also sample with $\mathrm{prob}(k) = |\gamma_{k,i}|/(\sum_k|\gamma_{k,i}|)$, and apply the correction factor $\gamma_{k,i}/\mathrm{prob}(k)$. While for certain settings this may well lead to estimators with lower variance, for ease of presentation we restrict ourselves here to uniform sampling over linear combinations. As $g_i(\vt)$ is an unbiased estimator for $\partial\mathcal{L}(\vt)/\partial \theta_i$ one can use this estimator for an SGD optimizer, requiring only $n$ measurements per optimization step, which is summarized via the following algorithms:

\begin{algorithm}[H]
  \caption{Stochastic gradient descent
    \label{alg:stochastic_gradient_base}}
  \begin{algorithmic}[1]
    \State Set initial circuit parameters $\vt^{(0)} \in \mathbb{R}^d$ and learning rate $\alpha^{(0)}$
    \State $t\gets 0$
    \While{$t < T $} \Comment{Iterate through optimization steps}
        \ForAll{ $1 \leq i \leq d$} \Comment{Iterate through circuit parameters}
        	\State $g_i(\vt^{(t)}) \gets \Call{PartialDerivativeEstimator}{i,\vt^{(t)}}$ \Comment{Construct estimator for $\partial\mathcal{L}(\vt^{(t)})/\partial \theta_i$}
        \EndFor
    \State $\alpha^{(t+1)} \gets \Call{GetLearningRate}{t,\alpha^{(t)}}$ \Comment{Update the learning rate}
    \State $\theta_i^{(t+1)} \gets \theta_i^{(t)} - \alpha^{(t+1)} g_i(\vt^{(t)})$ \Comment{Update all circuit parameters} \label{l:update_rule}
    \State $t \gets t+1$
    \EndWhile
  \end{algorithmic}
\end{algorithm}

\begin{algorithm}[H]
  \caption{$n$-shot doubly stochastic partial derivative estimator
    \label{alg:nshot_dsgd_estimator}}
  \begin{algorithmic}[1]
  	\Statex Given $U(\vt)$ satisfying a $K$-term parameter shift rule, and $\mathcal{L}(\vt) = \langle O\rangle_{\vt}$
    \Statex
    \Function{PartialDerivativeEstimator}{$i,\vt^{(t)}$}
    \State Sample $k \in \{1,\ldots, K\}$ with $\mathrm{prob}(k) = 1/K$ \Comment{Sample a parameter shift term}
        	\State Evaluate $\tilde{o}^{(n)}(\vt^{(t)}_{k,i})$  \Comment{ Construct estimator for $\langle O\rangle_{\vt^{(t)}_{k,i}}$ via $n$ measurements}
        	\State $g_i(\vt^{(t)}) \gets (K\gamma_{k,i})\tilde{o}^{(n)}(\vt^{(t)}_{k,i})$ \Comment{ Construct estimator for $\partial\mathcal{L}(\vt^{(t)})/\partial \theta_i$ by applying correction factor}
    \State \Return $g_i(\vt^{(t)})$
    \EndFunction
  \end{algorithmic}
\end{algorithm}

\noindent At this stage, the basic idea should be clear: In order to implement stochastic gradient descent one requires estimators for all partial derivatives, and when these partial derivatives can be expressed via linear combinations of expectation values then one can define a simple unbiased estimator for the gradient by using a finite number of measurements to estimate the expectation values occurring in the linear combination. Moreover, one can define a ``doubly stochastic" estimator, as illustrated in Algorithm \ref{alg:nshot_dsgd_estimator}, by additionally sampling a subset of the terms of the linear combination in each optimization step. However, as we will see in the following sections, such linear combinations can arise also from loss functions that are a linear combination of observables, as in VQE, or from loss functions which are a sum over data-set instances, as for example found in quantum classification problems. In the latter example, one has to exercise caution as the loss function may not be a simple linear combination of expectation values, but rather a linear combination of non-linear functions of expectation values. Generic settings such as these require care to deal with properly, as discussed in Sections \ref{s:mse} and \ref{s:extensions}. Finally, before continuing we note that we have not specified the function \textproc{GetLearningRate} which is called by Algorithm \ref{alg:stochastic_gradient_base}. While a constant learning rate could be used, In practice a variety of different adaptive strategies are exploited, which can be proven to improve convergence properties in restricted settings \cite{karimi2016linear,pmlr-v89-li19c}. Additionally, it should be noted that a wide variety of more sophisticated variations to Algorithm \ref{alg:stochastic_gradient_base} exist, in which both the learning rate and parameter update rule depend explicitly on the history of prior gradient estimates \cite{ruder2016overview}. We will explore and compare such variations to Algorithm \ref{alg:stochastic_gradient_base} in Section \ref{s:results}.

\section{Unbiased estimators for VQE}\label{s:vqe}

In this section, we examine how the ideas introduced in Section \ref{s:setting} can be straightforwardly extended and applied to the setting of variational quantum eigensolvers (VQE). In particular, given a local Hamiltonian 
\begin{equation}
H = \sum_{j = 1}^MH_j,
\end{equation}
which encodes a problem of interest, we define the VQE loss function
\begin{equation}\label{e:vqe_loss}
    \ml(\vt) =  \langle H \rangle_{\vt} = \sum_{j = 1}^M\langle H_j \rangle_{\vt}.
\end{equation}
For this simple loss function, we have that
\begin{equation}\label{e:grad_vqe}
    \frac{\partial \ml(\vt)}{\partial \theta_i} = \frac{\partial \langle H \rangle_{\vt}}{\partial \theta_i} = \sum_{j = 1}^M \frac{\partial \langle H_j \rangle_{\vt}}{\partial \theta_i}
\end{equation}
and it follows that unbiased estimators for $\partial\langle H_j \rangle_{\vt}/\partial \theta_i$ can be straightforwardly linearly combined to construct an unbiased estimator for $\partial \ml(\vt)/\partial \theta_i$.  From the previous section it should however be clear that, provided $U(\vt)$ satisfies a parameter shift rule, one can easily construct either singly or doubly stochastic estimators for all $\partial\langle H_j \rangle_{\vt}/\partial \theta_i$ via Eq. \eqref{e:parameter_shift}, replacing $O$ with $H_j$. Once again, these estimators for $\partial\langle H_j \rangle_{\vt}/\partial \theta_i$ can then either be directly combined to provide an unbiased estimator for $\partial \ml(\vt)/\partial \theta_i$, or one can further sample over linear combinations of subsets of local Hamiltonian terms. To be clear, let us assume that $U(\vt)$ satisfies a $K$-term parameter shift rule, i.e., for all $H_j$
\begin{equation}
    \frac{\partial\langle H_j \rangle_{\vt}}{\partial \theta_i} = \sum_{k = 1}^K \gamma_{k,i} \langle H_j \rangle_{\vt_{k,i}}.
\end{equation}
We then have that
\begin{equation}
    \frac{\partial \mathcal{L}(\vt)}{\partial \theta_i} = \sum_{j = 1}^M\sum_{k = 1}^K \gamma_{k,i} \langle H_j \rangle_{\vt_{k,i}}.
\end{equation}
As a result, the simplest unbiased estimator for $\partial \ml(\vt)/\partial \theta_i$ that we can define is just the linear combination of estimators for $\langle H_j \rangle_{\vt_{k,i}}$ -- i.e., the random variable 
\begin{equation}
    g_i(\vt) = \sum_{j = 1}^M\sum_{k = 1}^K \gamma_{k,i} \tilde{h}_j^{(n)}(\vt_{k,i}),
\end{equation}
where $\tilde{h}_j^{(n)}(\vt_{k,i})$ is the $n$-sample mean estimator of $\langle H_j \rangle_{\vt_{k,i}}$, as per Definition \ref{d:sample_mean}.
This estimator, which requires $nKM$ measurements per parameter update, can be constructed using the function~ \textproc{PartialDerivativeEstimator\_1} given in Algorithm \ref{alg:VQE_estimators} below.

However, as we have seen in the previous section, in order to obtain more efficient unbiased estimators for $\partial \ml(\vt)/\partial \theta_i$, we can sample subsets of terms from either the summation over parameter shift terms, the summation over local Hamiltonian terms, or both. In order to provide an explicit example, consider the function \textproc{PartialDerivativeEstimator\_2} in Algorithm \ref{alg:VQE_estimators}, in which only a single local term of the Hamiltonian is sampled, with a cost of $nK$ measurements per parameter update. Note the similarity of the approach taken here with \emph{random compiling} in digital quantum simulation \cite{Earl}, which in that context offers advantages such as favourable scaling of Trotter error accumulation. While this function explicitly samples a single local term of the Hamiltonian, it is possible to straightforwardly generalize this to an algorithm which samples a \emph{subset} of local terms of the Hamiltonian for each parameter update. In particular, note that if $[H_l,H_m] = 0$, then the estimators $\tilde{h}_l^{(n)}(\vt_{k,i})$ and $\tilde{h}_m^{(n)}(\vt_{k,i})$ could both be constructed from measurements of $n$ copies of the state vector  $U(\vt_{k,i})|\vec{0}\rangle$. As a result, by sampling subsets of commuting local terms in each parameter update step, which can be measured simultaneously on the state prepared by a single circuit execution, we are able to decrease the variance of the estimators while conserving the number of circuit executions required per parameter update. This approach allows for previously developed methods for grouping commuting Hamiltonian terms \cite{gokhale2019minimizing}, again reminiscent of methods of grouping of terms in digital quantum simulation \cite{Wiebe}, to be easily applied within this context.

Finally, the function \textproc{PartialDerivativeEstimator\_3} in Algorithm \ref{alg:VQE_estimators} makes clear how in a similar manner one could additionally sample over terms (or subsets of terms) of the parameter shift summation, resulting in an SGD algorithm requiring only $n$ measurements per parameter update. Of course, despite the increase in efficiency per optimization step, using small values of $n$ and sampling over terms of the linear combinations leads to estimators with increased variance, and intuitively one may expect this to lead to slower convergence (in terms of number of update steps required), and to solutions which may be relatively far from global minima. This intuition is indeed valid, and in Sections \ref{s:convergence} and \ref{s:results} we explore these trade-offs between efficiency and accuracy, from both an analytical and numerical perspective.

\begin{algorithm}[H]
  \caption{$n$-shot stochastic partial derivative estimators for VQE
    \label{alg:VQE_estimators}}
  \begin{algorithmic}[1]
  	\Statex Given $U(\vt)$ satisfying a $K$-term parameter shift rule, with $\vt \in \mathbb{R}^d$, and $\mathcal{L}(\vt) = \langle H\rangle_{\vt}$ where $H = \sum_{j = 1}^M H_j$
    \Statex
    \Function{PartialDerivativeEstimator\_1}{$i,\vt^{(t)}$}
\ForAll { $1 \leq j \leq M$} \Comment{Iterate through Hamiltonian terms}
        		\ForAll { $1 \leq k \leq K$} \Comment{Iterate through parameter shift terms}
        			\State Evaluate $\tilde{h}_j^{(n)}(\vt^{(t)}_{k,i})$  \Comment{Construct estimator for $\langle H_j \rangle_{\vt^{(t)}_{k,i}}$ via $n$ measurements}
        		\EndFor
        	\EndFor
        \State $g_i(\vt^{(t)}) \gets\sum_{j = 1}^M\sum_{k = 1}^K\gamma_{k,i}\tilde{h}_j^{(n)}(\vt^{(t)}_{k,i})$  \Comment{Construct estimator for $\partial \langle H \rangle_{\vt^{(t)}}/\partial \theta_i $ via linear combination}
    \State \Return $g_i(\vt^{(t)})$
    \EndFunction
            \Statex 
        \Function{PartialDerivativeEstimator\_2}{$i,\vt^{(t)}$}
    \State Sample $j \in \{1,\ldots,M\}$ with $\mathrm{prob}(j) = 1/M$ \Comment{Sample a Hamiltonian term}
        		\ForAll { $1 \leq k \leq K$} \Comment{Iterate through parameter shift terms}
        			\State Evaluate $\tilde{h}_j^{(n)}(\vt^{(t)}_{k,i})$  \Comment{Construct estimator for $\langle H_j \rangle_{\vt^{(t)}_{k,i}}$ via $n$ measurements}
        		\EndFor
        \State $g_i(\vt^{(t)}) \gets \sum_{k = 1}^K M\gamma_{k,i}\tilde{h}_j^{(n)}(\vt^{(t)}_{k,i})$  \Comment{Construct estimator for $\partial \langle H \rangle_{\vt^{(t)}}/\partial \theta_i $ via linear combination and correction factor}
    \State \Return $g_i(\vt^{(t)})$
    \EndFunction
    \Statex 
        \Function{PartialDerivativeEstimator\_3}{$i,\vt^{(t)}$}
    \State Sample $j \in \{1,\ldots,M\}$ with $\mathrm{prob}(j) = 1/M$ \Comment{Sample a Hamiltonian term}
    \State Sample $k \in \{1,\ldots,K\}$ with $\mathrm{prob}(k) = 1/K$ \Comment{Sample a parameter-shift term}
        			\State Evaluate $\tilde{h}_j^{(n)}(\vt^{(t)}_{k,i})$  \Comment{Construct estimator for $\langle H_j \rangle_{\vt^{(t)}_{k,i}}$ via $n$ measurements}
        \State $g_i(\vt^{(t)}) \gets K M\gamma_{k,i}\tilde{h}_j^{(n)}(\vt^{(t)}_{k,i})$  \Comment{Construct estimator for $\partial \langle H \rangle_{\vt^{(t)}}/\partial \theta_i$ by applying correction factor}
    \State \Return $g_i(\vt^{(t)})$
    \EndFunction
  \end{algorithmic}
\end{algorithm}

\section{Unbiased estimators for QAOA}\label{s:qaoa}

As in the VQE setting, given a problem Hamiltonian 
\begin{equation}
    H^{P} = \sum_{j = 1}^MH_j^{P},
\end{equation}
whose ground state encodes the solution to some combinatorial problem, the QAOA loss function is once again given simply by $\mathcal{L}(\vt) = \langle H^{P}\rangle_{\vt}$. Unlike VQE however, the QAOA algorithm is defined additionally by a specific parameterized circuit architecture, designed such that the variational circuit implements a discretized adiabatic evolution, from some known initial state, into the desired target state. While the loss function is the same, the nature of this specific ansatz has consequences for the design of SGD optimizers, which we focus our attention on here. To be more precise, when using parameterized circuits in the context of hybrid quantum-classical optimization, one often considers parameterized circuits of the form

\begin{equation}
    U(\vt) = \prod_{i = 1}^dU_i(\theta_i),
\end{equation}
where each parameterized gate is parameterized by a distinct parameter \cite{benedetti2019parameterized}, and for models of this type parameter shift rules can be derived under various different assumptions on the form of the constituent gates \cite{schuld2019evaluating} (as per Example~\ref{example:ps}). By contrast, the QAOA parameterized circuit ansatz is defined as
\begin{equation}
    U(\vt) = [e^{-i\theta_dH^B}e^{-i\theta_{d-1}H^P}] \ldots    [e^{-i\theta_2H^B}e^{-i\theta_1H^P}],
\end{equation}
where $H^B = \sum_{j = 1}^{M'}H^B_j$ is a mixing Hamiltonian whose ground state is easy to prepare. Note that in this case the variational parameters define a Trotterized time evolution, and therefore a discretization of a typical adiabatic protocol. Under the assumption that all local terms of both the mixing and problem Hamiltonian commute -- i.e., $[H_j^P,H_{j'}^P] = 0$ and $[H_{j}^B,H_{j'}^B] = 0$ for all $j,j'$ -- one has that
\begin{equation}\label{eq:qaoa_ansatz}
    U(\vt) = \big[\big(\prod_{j' = 1}^{M'}e^{-i\theta_dH^B_{j'}}\big)\big(\prod_{j = 1}^{M}e^{-i\theta_{d-1}H^P_{j}}\big)\big] \ldots    \big[\big(\prod_{j' = 1}^{M'}e^{-i\theta_2H^B_{j'}}\big)\big(\prod_{j = 1}^{M}e^{-i\theta_{1}H^P_{j}}\big)\big],
\end{equation}
and one sees that multiple constituent gates are parameterized by the same variational parameter. In order to understand how a parameter shift rule can still be applied in this case, let us consider first a simple parameterized circuit architecture consisting of only a single variational parameter, which parameterizes a single variational gate, i.e., 
\begin{equation}
    U(\theta) = U_Le^{-i\theta G}U_R.
\end{equation}
In this case we see that for any observable $O$ the partial derivative of the parameterized expectation value is given by 
\begin{equation}\label{e:deriv_simple}
    \frac{\partial}{\partial \theta}\langle O\rangle_{\theta} = \langle \vec{0}| U^\dagger(\theta) O U_L\left(\frac{\partial}{\partial \theta}e^{-i\theta G}\right)U_R|\vec{0}\rangle + \hc,
\end{equation}
and from this expression, if $\mathrm{e}^{-\mathrm{i}\theta G}$ admits a $K(G)$-term parameter shift rule, then by definition
\begin{equation}
    \frac{\partial}{\partial \theta}\langle O\rangle_{\theta} = \sum_{k = 1}^{K(G)}\gamma_k(G)\langle O\rangle_{\theta_{k,G}},
\end{equation}
where we have denoted explicitly that the number of terms, linear coefficients and parameter shifts all depend on $G$. Let us now consider a parameterized circuit with a single variational parameter, parameterizing multiple gates, whose generators $G_j$ all commute, i.e.,
\begin{equation}
     U(\theta) = U_L\big(\prod_j^M e^{-i\theta G_j}\big)U_R.
\end{equation}
In particular, we note by comparison with Eq. \eqref{eq:qaoa_ansatz} that the QAOA ansatz is precisely of this form. In this case, the partial derivative $\langle O\rangle_{\theta}$ is given by 
\begin{align}
    \frac{\partial}{\partial \theta}\langle O\rangle_{\theta} &= \langle \vec{0}| U^\dagger(\theta) O U_L\Big(\frac{\partial}{\partial \theta}\big[\prod_{j'}^Me^{-i\theta G_{j'}}\big]\Big) U_R | \vec{0}\rangle + \hc\\
    &= \sum_{j = 1}^M \langle \vec{0}| U^\dagger(\theta) O U_L \left(\frac{\partial}{\partial \theta}e^{-i\theta G_j}\right)\big[\prod_{j'\neq j}^M e^{-i\theta G_{j'}}\big]U_R|\vec{0}\rangle + \hc\\
    &= \sum_{j = 1}^M \big( \langle \vec{0}| U^\dagger(\theta) O U_L\left(\frac{\partial}{\partial \theta}e^{-i\theta G_j}\right)\tilde{U}_R|\vec{0}\rangle  + \hc\big) .
\end{align}
By comparison with Eq.~\eqref{e:deriv_simple} we therefore see that, under the assumption that all the gates $\mathrm{e}^{-\mathrm{i}\theta G_j}$'s still satisfy a parameter shift rule,
\begin{equation}
    \frac{\partial}{\partial \theta}\langle O\rangle_{\theta} = \sum_{j = 1}^M\sum_{k = 1}^{K(G_j)}\gamma_k(G_j)\langle O\rangle_{\theta_{k,G_j}}.
\end{equation}
At this stage we can finally see that in the case where multiple gates are parameterized by the same variational parameter $\theta$ -- as is the case for the QAOA ansatz --  the parameter shift rule we obtain for the partial derivative $\partial \langle O\rangle_{\theta} /\partial \theta$ can be written as a double sum. The first of these summations runs over all gates parameterized by this parameter, while the second runs over the parameter shift terms for a specific gate. As the QAOA loss function is identical to the VQE loss function, all the unbiased estimators of the previous section can be straightforwardly adapted to the QAOA setting, while the additional structure of the parameter shift rule, allows one to further sample from the linear combination over gates parameterized by the same variational parameter.

\section{Unbiased estimators for MSE Quantum classifiers}\label{s:mse}

Given the tools developed in the previous sections it is now relatively straightforward to construct unbiased estimators, and therefore stochastic gradient descent optimizers, for mean squared error (MSE) quantum classifiers. As we will see, the primary obstacle in this case is that the required partial derivatives are not linear combinations of expectation values, but linear combinations of non-linear functions of expectation values. To begin, it is helpful to define the mean squared error loss, as used for example in univariate regression. Specifically, given a target value $y \in \mathbb{R}$ we define the mean squared error (MSE) loss with respect to the operator $O$ as

\begin{equation}\label{e:mse_loss}
    \ml_{\mathrm{MSE}}(\vt) = (\langle O\rangle_{\vt} - y)^2.
\end{equation}
If one is given a data-set of binary labelled examples, as is the case in classification problems, it is possible to use the average MSE loss over the data-set as the loss function for a classification algorithm. More precisely, given a data-set $\mathcal{D} = \{(x_j,y_j)\, |\, j \in [M]\}$, with $y_j \in \mathbb{R}$, let us define $\langle O\rangle_{\vt,x_j}$ as the expectation value of the operator $O$ with respect to the state vector $U(\vt,x_j)|\vec{0}\rangle$, i.e.,

\begin{equation}\label{e:input_param_circuit}
    \langle O\rangle_{\vt,x_j} = \langle \vec{0}|U^\dagger(\vt,x_j)OU(\vt,x_j)|\vec{0}\rangle,
\end{equation}
where in this case $U(\vt,x_j)$ is the unitary operator corresponding to a quantum circuit parameterized by both $\vt$ and $x_j$. The MSE loss with respect to $O$, over the data-set $\mathcal{D}$, is then defined as 

\begin{align}\label{e:mse_loss_dset}
    \ml(\vt) &= \frac{1}{M}\sum_{j = 1}^M(\langle O\rangle_{\vt,x_j} - y_j)^2 \\
    &:=\frac{1}{M}\sum_{j = 1}^M\ml_{\mathrm{MSE},j}(\vt). \label{e:mse_data-set_loss}
\end{align}
We note that typically one may also add a regularization term to the above loss function, which we have omitted for notational clarity, but can be straightforwardly included in our analysis, as will be later shown. In order to construct stochastic gradient descent optimizers for the loss function of Eq.~\eqref{e:mse_loss_dset}, it is convenient to start with the simplified ``single instance" MSE loss function of Eq.~\eqref{e:mse_loss}. For this loss function, we have that

\begin{equation}
    \frac{\partial \ml_{\mathrm{MSE}}(\vt)}{\partial \theta_i} = 2\langle O\rangle_{\vt}\frac{\partial \langle O\rangle_{\vt}}{\partial \theta_i}  - 2y\frac{\partial \langle O\rangle_{\vt}}{\partial \theta_i}.
\end{equation}
In order to obtain an unbiased estimator for the partial derivative $\partial \ml_{\mathrm{MSE}}(\vt) /\partial \theta_i$  it is sufficient to have \emph{independent} unbiased estimators for $\langle O\rangle_{\vt}$ and $\partial\langle O\rangle_{\vt} /\partial \theta_i$. However, for settings in which a parameter shift rule applies, we have already constructed such estimators. Explicitly, the $n$-sample mean $\tilde{o}^{(n)}(\vt)$ is an unbiased estimator for $\langle O\rangle_{\vt}$ , while under the assumption of a $K$-term parameter shift rule

\begin{equation}
    \tilde{d}_i^{(n)}(\vt) = \sum_{k = 1}^K\gamma_{k,i}\tilde{o}^{(n)}(\vt_{k,i}),
\end{equation}
is an unbiased estimator for $\partial\langle O\rangle_{\vt} /\partial \theta_i$, which is independent from $\tilde{o}^{(n)}(\vt)$ since they are evaluated from measurements of different circuit evaluations. As a result, one can straightforwardly show that the random variable

\begin{equation}
    \tilde{d}^{(n)}_{\mathrm{MSE},i}(\vt) :=  2\tilde{o}^{(n)}(\vt)\tilde{d}_i^{(n)}(\vt)  - 2y\tilde{d}_i^{(n)}(\vt)
\end{equation}
is an unbiased estimator for $\partial \ml_{\mathrm{MSE}}(\vt) /\partial \theta_i$, which requires $(K+1)n$ measurements to construct. Of course, in the same spirit as previous sections, one can also sample single terms from the summation over the parameter shift terms, giving rise to an estimator which requires only $2n$ measurements to evaluate. In light of the above it is now straightforward to design unbiased estimators for MSE quantum classifiers, as the loss function of Eq.~\eqref{e:mse_loss_dset} involves only an additional linear combination over data-set instances. In order to formalize this, let's start with the following definitions:

\begin{definition}
Given a data-set $\mathcal{D} = \{(\vec{x}_j,y_j)\, |\, j \in [M]\}$, let us define
$\tilde{o}^{(n)}(\vt,j)$ as the $n$-sample mean estimator of $\langle O\rangle_{\vt,x_j} = \langle \vec{0}|U^\dagger(\vt,x_j)OU(\vt,x_j)|\vec{0}\rangle$. Furthermore, let $U(\vt, x_j)$ satisfy a $K$-term parameter shift rule (w.r.t. $\vt$), and let us further define
\begin{align}
\tilde{d}_i^{(n)}(\vt,j) &:= \sum_{k = 1}^K\gamma_{k,i}\tilde{o}^{(n)}(\vt_{k,i},j),\\
\tilde{d}^{(n)}_{\mathrm{MSE},i}(\vt,j) &:=  2\tilde{o}^{(n)}(\vt,j)\tilde{d}_i^{(n)}(\vt,j)  - 2y_j\tilde{d}_i^{(n)}(\vt,j).
\end{align}
\end{definition}
\noindent Given these estimators, we have that for the data-set MSE loss function of Eq.~\eqref{e:mse_data-set_loss}
\begin{equation}
\frac{\partial \mathcal{L}(\vt)}{\partial \theta_i}  = \frac{1}{M}\sum_{j = 1}^M \frac{\partial \mathcal{L}_{\mathrm{MSE},j}(\vt)}{\partial \theta_i}, 
\end{equation}
and therefore the random variable 
\begin{equation}
    g^{(n)}_i(\vt) = \sum_{j = 1}^M\frac{1}{M}\tilde{d}^{(n)}_{\mathrm{MSE},i}(\vt,j)
\end{equation}
is an unbiased estimator for $\partial \mathcal{L}(\vt)/ \partial \theta_i$, which requires $M(K+1)n$ measurements to construct. However, as in the case of classical mini-batch SGD, one can sample subsets  $B$ (batches) of data instances in each optimization step, giving rise to a ``doubly stochastic" unbiased estimator which requires only $\lvert B\rvert(K+1)n$ measurements per parameter update. We present this particular algorithm below, for the case $|B| = 1$, however given the explicit VQE estimators of Algortithm \ref{alg:VQE_estimators} it should be clear how this can be extended to include sampling over parameter shift terms.

\begin{algorithm}[H]
  \caption{$n$-shot doubly stochastic partial derivative estimator for MSE type loss functions
    \label{alg:MSE_loss functions}}
  \begin{algorithmic}[1]
  	\Statex Given $U(\vt)$ satisfying a $K$-term parameter shift rule, with $\vt \in \mathbb{R}^d$, a data-set $\mathcal{D} = \{(x_j,y_j)\, |\, j\in[M]\}$, and the loss function of Eq. \eqref{e:mse_data-set_loss}
    \Statex
    \Function{PartialDerivativeEstimator\_4}{$i,\vt^{(t)}$}
\State Sample $j$ uniformly from $\{1,\ldots,M\}$  \Comment{Sample a data-set instance}
        \State Evaluate $\tilde{o}^{(n)}(\vt^{(t)},j)$ \Comment{Construct estimator for $\langle O\rangle_{\vt^{(t)},x_j}$ via $n$ measurements}
        	\ForAll{$1 \leq k \leq K$}    \Comment{Iterate over parameter shift terms}
        		\State Evaluate $\tilde{o}^{(n)}(\vt^{(t)}_{k,i},j)$ \Comment{Construct estimator for $\langle O \rangle_{\vt^{(t)}_{k,i},x_j}$ via $n$ measurements}
        	\EndFor
        	\State $\tilde{d}_i^{(n)}(\vt^{(t)},j) \gets \sum_{k = 1}^K\gamma_{k,i}\tilde{o}^{(n)}(\vt^{(t)}_{k,i},j)$ \Comment{Construct estimator for $\partial \langle O\rangle_{\vt^{(t)},x_j}/\partial \theta_i$}
        \State $\tilde{d}^{(n)}_{\mathrm{MSE},i}(\vt^{(t)},j) \gets 2\tilde{o}^{(n)}(\vt^{(t)},j)\tilde{d}_i^{(n)}(\vt^{(t)},j)  - 2 y_j\tilde{d}_i^{(n)}(\vt,j)$ \Comment{Construct estimator for $\partial \ml_{\mathrm{MSE},j}(\vt^{(t)})/\partial \theta_i$}
        \State $\tilde{g}_i^{(n)}(\vt^{(t)}) \gets \tilde{d}^{(n)}_{\mathrm{MSE},i}(\vt^{(t)},j)$ \Comment{Construct estimator for $\partial \ml(\vt^{(t)})/\partial \theta_i$}
    \State \Return $g_i(\vt^{(t)})$
    \EndFunction
  \end{algorithmic}
\end{algorithm}

\section{Extensions to Generic Loss Functions}\label{s:extensions}

In light of the optimizers presented for the MSE quantum classifier, a natural question is whether or not analogous optimizers can be obtained when regularization terms are included, or when alternative loss functions, such as the cross-entropy, are used. Let us begin with a remark on the inclusion of regularization terms, which can be handled straightforwardly via the following observation:

\begin{observation}
Given some loss function $\ml$, let the random variable $g(\vt)$ be an unbiased estimator for $\nabla \ml(\vt)$ -- i.e., $\mathbb{E}[{g}(\vt)] = \nabla \ml(\vt)$. Then, the random variable 
\begin{equation}
    \tilde{g}(\vt) = g(\vt) + \nabla R(\vt)
\end{equation}
is an unbiased estimator for the regularized loss function $\tilde{\ml}(\vt) = \ml(\vt) + R(\vt)$. This can be straightforwardly verified via
\begin{equation}
\mathbb{E}[\tilde{g}(\vt)] = \mathbb{E}[g(\vt) + \nabla R(\vt)] = \mathbb{E}[g(\vt)] + \nabla R(\vt) =  \nabla \ml(\vt) + \nabla R(\vt) = \nabla \tilde{\ml}(\vt).
\end{equation}
\end{observation}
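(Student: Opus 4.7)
The plan is to verify the claim directly by taking the expectation of $\tilde{g}(\vt)$ and invoking linearity of expectation together with the given hypothesis on $g(\vt)$. Since $\nabla R(\vt)$ is a deterministic quantity (the gradient of a fixed regularizer evaluated at a fixed parameter $\vt$), it can be pulled out of any expectation unchanged, and the only nontrivial input is the assumption $\mathbb{E}[g(\vt)] = \nabla\mathcal{L}(\vt)$.

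Concretely, I would write $\mathbb{E}[\tilde{g}(\vt)] = \mathbb{E}[g(\vt) + \nabla R(\vt)]$, apply linearity of expectation to split this into $\mathbb{E}[g(\vt)] + \mathbb{E}[\nabla R(\vt)]$, and then use the two facts $\mathbb{E}[\nabla R(\vt)] = \nabla R(\vt)$ and $\mathbb{E}[g(\vt)] = \nabla\mathcal{L}(\vt)$ to obtain $\nabla\mathcal{L}(\vt) + \nabla R(\vt)$. Finally, by linearity of the gradient operator, this equals $\nabla(\mathcal{L}(\vt) + R(\vt)) = \nabla\tilde{\mathcal{L}}(\vt)$, which is exactly the claim. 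This is essentially the one-line computation already displayed in the observation itself.

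There is no real obstacle here, as the statement is a textbook application of linearity. The only conceptual point worth flagging explicitly in a slightly more careful writeup would be the implicit independence/measurability assumption: $\nabla R(\vt)$ must be regarded as a constant (not a random variable) given the current iterate $\vt$, which is the standard convention when $R$ is a fixed, deterministic regularizer such as $R(\vt) = \lambda\lVert\vt\rVert^2$. Under that standard reading the proof is immediate, and no further argument is required.
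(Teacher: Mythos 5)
Your proof is correct and follows exactly the same one-line computation the paper itself gives: linearity of expectation, the fact that $\nabla R(\vt)$ is deterministic, and linearity of the gradient. Your remark that $\nabla R(\vt)$ must be treated as a constant given the current iterate is a reasonable clarification but adds nothing essential beyond the paper's argument.
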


\noindent Unfortunately however an extension to generic loss functions is not as straightforward. To see this let's consider a loss function $\mathcal{L}$, for which the partial derivative $\partial \mathcal{L}(\vt)/\partial \theta_i$ that we would like to estimate is a non-linear function of an expectation value, i.e., $\partial \mathcal{L}(\vt)/\partial \theta_i= f(\langle O\rangle_{\vt})$ for some non-linear function $f$. Assume now that the random variable $X(\vt)$ is an unbiased estimator for $\langle O \rangle_{\vt}$ (such as the $n$-sample mean estimator). If $f$ had been a linear function, then we would have that
\begin{equation}
    \mathbb{E} [ f(X(\vt)) ] = f(\mathbb{E}[X(\vt)]) = f(\langle O\rangle_{\vt}) = \frac{\partial}{\partial \theta_i}\ml(\vt),
\end{equation}
and as such $f(X)$ would be an unbiased estimator for $\partial \mathcal{L}(\vt)/\partial \theta_i= f(\langle O\rangle_{\vt})$. Unfortunately however for generic non-linear functions $f$ we have that  $f(\mathbb{E}[X]) \neq \mathbb{E} [ f(X) ]$, and as a result unbiased estimators for $f(\langle O\rangle_{\vt})$ cannot be straightforwardly constructed from unbiased estimators for $\langle O\rangle_{\vt}$, and more sophisticated strategies are necessary. While we leave the completely general case open for future work, we illustrate here a method for the construction of unbiased estimators for \emph{polynomial} functions of random variables (of which the previously considered MSE loss function is a special case). In particular, let $f:\mathbb{R}\rightarrow \mathbb{R}$ be a degree $k$ polynomial, i.e. 
\begin{equation}
    f(x) = \sum_{j = 0}^ka_jx^j,
\end{equation}
and let us define the function $h:\mathbb{R}^k \rightarrow \mathbb{R}$ via
\begin{equation}
    h(x_1,\ldots,x_k) = a_0 + \sum_{j = 1}^ka_j\left(\prod_{i = 1}^jx_i\right).
\end{equation}
In addition, for all $m \geq k$, let us denote by $\textbf{P}_{k,m}$ the set of all $m!/(m-k)!$ ordered arrangements $(i_1,\ldots,i_k)$ of size $k$ chosen from $(1,\ldots,m)$. Now, given a set $\{X_j\, |\, j \in [m]\}$ of $m \geq k$ independent copies of the random variable $X$, one can easily verify that the random variable 
\begin{equation}
    g^{(m)}(X_1,\ldots,X_m) = \frac{(m-k)!}{m!}\sum_{\textbf{P}_{k,m}}h(X_{i_1},\ldots,X_{i_k})
\end{equation}
is an unbiased estimator for $f(\mathbb{E}(X))$, which requires $m$ samples from $X$ to construct. In fact, $g^{(m)}$ is the U-statistic for the kernel $h$, and as such is the minimum variance unbiased estimator for $f(\mathbb{E}(X))$ \cite{lee2019u}. As many relevant loss functions, such as the cross entropy or the log-likelihood, are certainly not polynomials, it is worth noting before continuing that despite the absence of rigorous constructions it is still possible to apply the spirit of the constructions from the previous sections -- i.e., utilizing single shot estimations of observables and sampling over linear combinations -- to obtain \emph{biased estimators} for arbitrary loss functions, whose performance on practical tasks can be easily evaluated, and for which one may still be able to prove convergence bounds \cite{biased}. In fact, while unbiased estimators facilitate convergence proofs in simplified settings, it is not a-priori clear that naive (potentially biased) estimators would perform badly as heuristics. Such heuristic analysis, for loss functions such as the cross entropy, should be considered in parallel with a development of the theory. Our motivation in this work is to provide a general framework, and to lay the foundations for more sophisticated analysis.

\section{Convergence Guarantees}\label{s:convergence}

The construction of convergence guarantees for stochastic gradient descent optimizers in fully realistic settings is currently an active and open area of research, of critical importance for building a theoretically solid foundation for state of the art machine learning algorithms \cite{nguyen2018tight}. In particular, as discussed in the previous sections, many prior approaches to gradient descent optimization in the context of hybrid quantum-classical optimization have been large $n$ instances of the algorithms from the previous sections (without sampling over linear combinations), and one of the primary motivations of this work is to place these previously heuristic approaches on a solid formal footing. While convergence guarantees are not yet available for state of the art models, highly non-convex landscapes and optimization algorithms with heuristically adapted learning rates, in certain simplified settings it is indeed possible to obtain convergence theorems, which allow one to build a measure of confidence in these techniques. In order to gain a more formal perspective on the algorithms presented in the previous sections, as well as insight into the requirements for the development of further well motivated stochastic gradient descent algorithms, we examine in this section convergence guarantees for loss functions  $\mathcal{L}:\mathbb{R}^d \rightarrow \mathbb{R}$ which satisfy the \emph{Polyak-\! Lojasiewicz} (PL) inequality, a slightly generalized notion of strong convexity \cite{karimi2016linear}. These bounds are complimentary to the bounds previously discussed by Harrow and Napp for strongly-convex loss functions \cite{harrow_napp}, and are presented explicitly in order to facilitate discussion around both the technical requirements for such guarantees in the hybrid quantum-classical setting, and the trade-offs inherent in SGD optimization schemes, which are explored numerically in the following section. In particular, a loss function  $\mathcal{L}:\mathbb{R}^d \rightarrow \mathbb{R}$ satisfies the PL inequality for some $\mu > 0$ if for all $\vt \in \mathbb{R}^d$ it is true that

\begin{equation}\label{e:PL_inequality}
\frac{1}{2} \nl\nabla \ml(\vt)\nr^2 \geq \mu \big(\ml(\vt) - \ml(\vt^*)\big),
\end{equation}
where $\vt^* \in \mathbb{R}^d$ is the value at which $\ml(\vt)$ attains a global minimum. Under the assumption of the PL inequality, one can state the following convergence theorem, whose proof can be found in Refs.~\cite{wolf2018mathematical, karimi2016linear}:

\begin{theorem}[Stochastic gradient descent convergence \cite{wolf2018mathematical}]\label{t:SGD} Let $\ml \in C^1(\mathbb{R}^d)$ satisfy the Polyak-Lojasiewicz inequality in Eq.~\eqref{e:PL_inequality} for some $\mu  > 0$, have $L$-Lipschitz gradient and a global minimum attained at $\vt^*$. For any $T\in \mathbb{N}$ and any $\vt \in \mathbb{R}^d$ let $g^{(1)}(\vt), \ldots, g^{(T)}(\vt)$ be i.i.d random variables with values in $\mathbb{R}^d$ such that $\mathbb{E}[g^{(t)}(\vt)] = \nabla \ml(\vt)$. With $\vt^{(0)} \in \mathbb{R}^d$ consider the sequence $\vt^{(t+1)} := \vt^{(t)} - \alpha g^{(t)}(\vt^{(t)})$.
\begin{enumerate}
    \item If $\forall \vt, t:\mathbb{E}[\nl g^{(t)} (\vt)\nr^2] \leq \gamma^2$ and $\alpha \in [0, 1/(2\mu)]$, then
    \begin{equation}\label{e:conv_distance_1}
        \mathbb{E}[\ml(\vt^{(T)})] - \ml(\vt^*) \leq (1-2\mu\alpha)^T\big(\ml(\vt^{(0)}) - \ml(\vt^*)\big) + \frac{L\alpha\gamma^2}{4\mu}.
    \end{equation}
    \item If $\forall \vt, t:\mathbb{E}[\nl g^{(t)} (\vt)\nr^2] \leq \beta^2 \nl\nabla\ml(\vt)\nr^2$ and $\alpha = 1/(L\beta^2)$, then
    \begin{equation}\label{e:conv_distance_2}
        \mathbb{E}[\ml(\vt^{(T)})] - \ml(\vt^*) \leq \Big(1 - \frac{\mu}{L\beta^2} \Big)^T\big(\ml(\vt^{(0)}) - \ml(\vt^*)\big).
    \end{equation}
\end{enumerate}
\end{theorem}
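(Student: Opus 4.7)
The plan is to prove both parts of Theorem~\ref{t:SGD} via a standard ``descent lemma + PL inequality'' argument, taking expectations step-by-step and then iterating the resulting one-step recursion. The starting point in both cases is $L$-smoothness of $\ml$, which (by the standard quadratic upper bound implied by a Lipschitz gradient) gives
\begin{equation}
\ml(\vt^{(t+1)}) \leq \ml(\vt^{(t)}) + \langle \nabla\ml(\vt^{(t)}), \vt^{(t+1)}-\vt^{(t)}\rangle + \frac{L}{2}\nl\vt^{(t+1)}-\vt^{(t)}\nr^2.
\end{equation}
Substituting the update rule $\vt^{(t+1)} = \vt^{(t)}-\alpha g^{(t)}(\vt^{(t)})$ and conditioning on $\vt^{(t)}$, the unbiasedness assumption $\mathbb{E}[g^{(t)}(\vt^{(t)}) \mid \vt^{(t)}] = \nabla\ml(\vt^{(t)})$ collapses the inner-product term to $-\alpha\nl\nabla\ml(\vt^{(t)})\nr^2$, leaving
\begin{equation}
\mathbb{E}[\ml(\vt^{(t+1)})\mid\vt^{(t)}] \leq \ml(\vt^{(t)}) - \alpha\nl\nabla\ml(\vt^{(t)})\nr^2 + \frac{L\alpha^2}{2}\mathbb{E}\big[\nl g^{(t)}(\vt^{(t)})\nr^2 \mid \vt^{(t)}\big].
\end{equation}
This is the key inequality that feeds both parts of the theorem.

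For Part 1, I would upper-bound the second-moment term by the uniform constant $\gamma^2$ and use the PL inequality $\nl\nabla\ml\nr^2 \geq 2\mu(\ml(\vt^{(t)})-\ml(\vt^*))$ to replace the gradient term. Subtracting $\ml(\vt^*)$ from both sides and taking the total expectation yields the one-step contraction
\begin{equation}
\mathbb{E}[\ml(\vt^{(t+1)})-\ml(\vt^*)] \leq (1-2\mu\alpha)\mathbb{E}[\ml(\vt^{(t)})-\ml(\vt^*)] + \frac{L\alpha^2\gamma^2}{2},
\end{equation}
where the contraction factor is non-negative thanks to the hypothesis $\alpha \leq 1/(2\mu)$. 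Unrolling this recursion $T$ times produces a geometric-series tail $\frac{L\alpha^2\gamma^2}{2}\sum_{k=0}^{T-1}(1-2\mu\alpha)^k$, which I would bound above by the full geometric series $\frac{L\alpha^2\gamma^2}{2}\cdot\frac{1}{2\mu\alpha} = \frac{L\alpha\gamma^2}{4\mu}$, recovering Eq.~\eqref{e:conv_distance_1}.

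For Part 2, I would instead plug in the multiplicative variance bound $\mathbb{E}[\nl g^{(t)}\nr^2] \leq \beta^2\nl\nabla\ml\nr^2$, so the key inequality becomes
\begin{equation}
\mathbb{E}[\ml(\vt^{(t+1)})\mid\vt^{(t)}] \leq \ml(\vt^{(t)}) - \alpha\Big(1-\frac{L\alpha\beta^2}{2}\Big)\nl\nabla\ml(\vt^{(t)})\nr^2.
\end{equation}
The choice $\alpha = 1/(L\beta^2)$ is calibrated exactly so that the prefactor simplifies to $\alpha/2 = 1/(2L\beta^2)$; one then applies the PL inequality once more, subtracts $\ml(\vt^*)$, and iterates the resulting contraction with rate $1-\mu/(L\beta^2)$ to obtain Eq.~\eqref{e:conv_distance_2}.

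The calculation is essentially routine once the descent lemma is in hand; the only subtle step is recognizing that the step-size hypotheses in the two parts are precisely what one needs to keep the contraction factors in $[0,1]$ (in Part 1) and to make the $\alpha^2$ noise term cancel half of the $\alpha$ gradient term (in Part 2). The mild point worth checking is that $\mu/(L\beta^2) \leq 1$, which follows from $\mu \leq L$ (a consequence of the PL inequality together with $L$-smoothness) and from $\beta^2 \geq 1$ (since $\mathbb{E}[\nl g\nr^2] \geq \nl\mathbb{E}[g]\nr^2 = \nl\nabla\ml\nr^2$ by Jensen/unbiasedness); no genuine obstacle arises.
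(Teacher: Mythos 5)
Your proof is correct and is exactly the standard descent-lemma-plus-Polyak-Lojasiewicz argument; the paper itself gives no proof but defers to Refs.~\cite{wolf2018mathematical,karimi2016linear}, whose argument is the same one-step recursion you derive (quadratic upper bound from the $L$-Lipschitz gradient, unbiasedness to collapse the cross term, the second-moment hypothesis, then PL and iteration, with the geometric-series tail in Part 1 and the step-size calibration $\alpha=1/(L\beta^2)$ in Part 2). No gaps; your closing sanity check that $\mu/(L\beta^2)\leq 1$ via $\mu\leq L$ and $\beta^2\geq 1$ is also sound.
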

\noindent We note that as all previously discussed algorithms have been constructed via unbiased estimators, Theorem \ref{t:SGD} applies directly to these algorithms provided one can prove Lipschitz continuity of the gradient $\nabla\mathcal{L}$. 
Additionally, in order to apply the convergence bounds of Theorem \ref{t:SGD} in a quantitative manner, it is necessary to obtain an upper bound on the quantity $\mathbb{E}[\nl g^{(t)} (\vt)\nr^2]$, which in principle amounts to calculating the variance of the estimator $g^{(t)}(\vt)$, as can be seen from the following straightforward calculation:
\begin{align}
    \mathbb{E}[\nl g^{(t)} (\vt)\nr^2] &= 
    \mathbb{E}[\nl g^{(t)} (\vt)\nr^2] - \mathbb{E}[\nl g^{(t)} (\vt)\nr]^2 + \mathbb{E}[\nl g^{(t)} (\vt)\nr]^2 \\
    &= \operatorname{Var}[ g^{(t)} (\vt)] + \nl \nabla \mathcal{L}(\vt)\nr^2,
\end{align}
where we defined the variance of a random vector as $\operatorname{Var}[X] := \mathbb{E}[\nl X \nr^2] - \mathbb{E}[\nl X\nr]^2$. As Theorem \ref{t:SGD} only applies in the simplified setting of loss functions which satisfy the PL inequality we will not address these upper bounds explicitly here, as in practice more sophisticated convergence theorems are necessary to make applicable quantitative statements for realistic settings. However, we note the critical role this variance plays in determining the distance between the global optimum and the solution one can be guaranteed to converge to, and in Section \ref{s:results} we explore this interplay, in realistic settings, through extensive numerical experiments. As such, we focus on the issue of Lipschitz continuity, and provide the following results proving that for a large class of realistic parameterized quantum circuits, under the assumption of a parameter shift rule, the gradient of the loss functions considered in this work are indeed Lipschitz continuous. To be specific, we start with the following theorem (whose proof can be found in Appendix \ref{app:lipschitz_proof}), which guarantees Lipschitz continuity of expectation values for a specific class of parameterized quantum circuits:

\begin{theorem}[Lipschitz continuity of expectation values]\label{thm:lips_cont1}
Consider a function $f:[0,2\pi]^M\mapsto\mathbb{R}$ defined by $f(\vt)=\langle\vec{0}|U^\dagger(\vt)OU(\vt)|\vec{0}\rangle$, where $|\vec{0}\rangle$ is an arbitrary state in $\mathbb{C}^D$ for some finite $D$, $U(\vt)$ is a quantum circuit consisting of an arbitrary finite number of fixed gates, and $M$ parameterized gates $U_j(\theta_j) = \mathrm{exp}(-i(\theta_j + c_j)H_j)$, with $H_j$ Hermitian. For any observable $O$ and any set of Hermitian operators $\{H_j\, |\, j\in[M]\}$, the function $f(\vt)$ is $L$-Lipschitz with $L = \sqrt{M}[\max_j(\sup_{\vt}( \vert\partial f(\vt)/\partial \theta_j\vert))]$.
\end{theorem}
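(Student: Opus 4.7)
\emph{Proof proposal.} The plan is to reduce the Lipschitz claim to a uniform bound on the gradient norm, and then to bound that gradient norm by the largest individual partial derivative. The domain $[0,2\pi]^M$ is compact and convex, and $f$ is smooth on it: each parameterized gate $U_j(\theta_j) = \exp(-i(\theta_j+c_j)H_j)$ is an entire matrix-valued function of $\theta_j$, so $f$, being a polynomial in the matrix entries of $U(\vt)$ and $U^\dagger(\vt)$, lies in $C^\infty([0,2\pi]^M)$. In particular all partial derivatives exist and are continuous.

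I would first invoke the mean value inequality: since the domain is convex, for any $\vt, \vt' \in [0,2\pi]^M$ the segment $\vt_s := \vt + s(\vt'-\vt)$, $s\in[0,1]$, lies inside it, and by the fundamental theorem of calculus together with Cauchy--Schwarz on the integrand,
\begin{equation}
|f(\vt)-f(\vt')| = \left| \int_0^1 \nabla f(\vt_s)\cdot(\vt'-\vt)\, ds \right| \leq \Big(\sup_{\vt\in[0,2\pi]^M} \nl \nabla f(\vt) \nr \Big) \nl \vt'-\vt \nr.
\end{equation}
I would then upper bound $\nl \nabla f(\vt) \nr$ by the largest partial derivative via Cauchy--Schwarz a second time,
\begin{equation}
\nl \nabla f(\vt) \nr = \sqrt{\sum_{j=1}^M \left|\frac{\partial f(\vt)}{\partial \theta_j}\right|^2} \leq \sqrt{M}\, \max_j \left|\frac{\partial f(\vt)}{\partial \theta_j}\right|,
\end{equation}
and take the supremum of both sides over $\vt$. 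Chaining these two inequalities yields exactly the claimed constant $L = \sqrt{M}[\max_j(\sup_{\vt}|\partial f(\vt)/\partial \theta_j|)]$.

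What remains is to verify that $L$ is finite, which is essentially bookkeeping. Differentiating $U(\vt)$ with respect to $\theta_j$ inserts a factor of $-iH_j$ adjacent to $U_j(\theta_j)$ while leaving the other unitary factors intact, so the product rule applied to $f$ gives $|\partial f(\vt)/\partial \theta_j| \leq 2\, \|O\|\, \|H_j\|$ uniformly in $\vt$, where $\|\cdot\|$ denotes the operator norm. All such norms are finite because the Hilbert space has finite dimension $D$, and continuity of the partial derivatives combined with compactness of $[0,2\pi]^M$ guarantees that the inner supremum is attained. I do not expect any real obstacle beyond this bookkeeping: the argument is a textbook application of the mean value inequality, and the only mildly non-standard feature is the $\sqrt{M}$ prefactor, which merely reflects the elementary inequality between the $\ell_2$ and $\ell_\infty$ norms on $\mathbb{R}^M$ used in the second step.
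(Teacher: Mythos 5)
Your proof is correct, and it reaches the stated constant by a genuinely different route from the paper. You use the integral form of the mean value inequality along the straight segment joining $\vt$ and $\vt'$ (valid here because $[0,2\pi]^M$ is convex), obtaining $|f(\vt)-f(\vt')|\leq \sup_{\vt}\Vert\nabla f(\vt)\Vert_2\,\Vert\vt-\vt'\Vert_2$, and then relax $\Vert\nabla f\Vert_2\leq\sqrt{M}\max_j|\partial f/\partial\theta_j|$. The paper instead proceeds coordinate-by-coordinate: it proves a univariate Lipschitz lemma via the mean value theorem, defines Lipschitz continuity in each argument separately, and telescopes $f(\vec{x})-f(\vec{y})$ by changing one coordinate at a time, finishing with the triangle inequality and $\Vert\vec{x}-\vec{y}\Vert_1\leq\sqrt{M}\Vert\vec{x}-\vec{y}\Vert_2$. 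Both arguments need the domain to be a box (the paper) or convex (you), and both hold here; your version is shorter and in fact passes through the sharper intermediate constant $\sup_{\vt}\Vert\nabla f(\vt)\Vert_2$ before relaxing to the claimed $L$, while the paper's modular lemma structure avoids any reference to the gradient as a vector and is reusable for functions that are merely Lipschitz in each argument without being differentiable. Two cosmetic points: the bound $\bigl(\sum_j|\partial_j f|^2\bigr)^{1/2}\leq\sqrt{M}\max_j|\partial_j f|$ is the $\ell_2$--$\ell_\infty$ norm comparison rather than Cauchy--Schwarz (as you yourself note at the end), and the finiteness of $L$ via $|\partial f/\partial\theta_j|\leq 2\Vert O\Vert\,\Vert H_j\Vert$ matches the bound the paper derives immediately after the theorem statement, so your bookkeeping is consistent with theirs.
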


\noindent Given this result, an upper bound to $L$ can be obtained as follows: One finds
\begin{eqnarray}
    L &= &\sqrt{M}\max_j \sup_{\vt} \left\vert 2 \mathrm{Re}\left[ \bra{\vec{0}}
        U^\dagger(\vt) O \frac{\partial U(\vt)}{\partial \theta_j}
    \ket{\vec{0}} \right] \right\vert \\
    &\leq  & 2\sqrt{M}\,\max_j \sup_{\vt} \left\vert \bra{\vec{0}}
        U^\dagger(\vt) O U_L(\vt)(-\mathrm{i}H_j)U_R(\vt)
    \ket{\vec{0}} \right\vert \label{proof2} \\
    &\leq
        &2\sqrt{M}\,\max_j \sup_{\vt} \left\Vert O U(\vt) \ket{\vec{0}} \right\Vert_2~
        \left\Vert U_L(\vt)H_jU_R(\vt)\ket{\vec{0}} \right\Vert_2 \label{proof3}  \\
    &\leq &2\sqrt{M}\,\max_j \sup_{\vt} 
        \Vert O \Vert_2~ \Vert U(\vt) \ket{\vec{0}} \Vert_2~
        \Vert U_L(\vt)\Vert_2~ \Vert H_j\Vert_2~
        \Vert U_R(\vt)\ket{\vec{0}}\Vert_2 \\
    &=  & 2\sqrt{M}\,\max_j \Vert O \Vert_2~ \Vert H_j \Vert_2, \label{e:bound_ref}
\end{eqnarray}
where $U(\vt) \equiv U_L(\vt)e^{-i\theta_j H_j}U_R(\vt)$, and Eq. \eqref{proof3} was obtained from Eq. \eqref{proof2} via the Cauchy-Schwarz inequality. In order to cover more realistic loss functions, we need to extend Theorem~\ref{thm:lips_cont1} to sums and products of parameterized expectation values. For example, we have already seen that for the MSE loss function, under the assumption of a parameter shift rule, we have that
\begin{equation}
\frac{\partial}{\partial \theta_j}\ml_{\mathrm{MSE}}(\vt) = 2\langle O\rangle_{\vt}\frac{\partial}{\partial \theta_j}\langle O\rangle_{\vt}  - 2y\frac{\partial}{\partial \theta_j}\langle
O\rangle_{\vt} 
\end{equation}
with $\partial\langle O\rangle_{\vt}/ \partial \theta_j$ given by
\begin{equation}\label{e:parameter_shift_1}
    \frac{\partial }{\partial \theta_j}\langle O\rangle_{\vt} =\sum_{k = 1}^K \gamma_{k,j}\langle O\rangle_{\vt_{k,j}},
\end{equation}
where $\vt_{k,j} =  \vt + c_{k,j}\vec{e}_j$. The first thing to notice is that if $\langle O\rangle_{\vt}$ is $L$-Lipschitz continuous via Theorem \ref{thm:lips_cont1}, then so is $\langle O\rangle_{\vt_{k,j}}$. We see this by noting that 
\begin{equation}
    U_j(\theta_j + c_{k,j})=\mathrm{exp}(-i(\theta_j + c_{k,j})H_j) =\mathrm{exp}(-i\theta_j H_j)\mathrm{exp}(-i c_{k,j}H_j)
\end{equation}
for any constant $c_{k,j}$, and therefore $\langle O\rangle_{\vt_{k,j}}$ satisfies the assumptions of the Lemma after absorbing $\mathrm{exp}(-ic_{k,j})H_j$ into the set of fixed (non-parameterized) gates. This observation, in conjunction with standard result that if $f(\vt)$ and $g(\vt)$ are Lipschitz continuous functions bounded on their domains, then $f + g$ and $fg$ are Lipschitz continuous \cite{Lipshitz}, 
guarantees Lipschitz continuity for the class of 
functions we are interested in.

\section{Numerical Experiments and Benchmarks}\label{s:results}

Given the unbiased estimators of Sections \ref{s:vqe}, \ref{s:qaoa} and \ref{s:mse}, we explore here their performance on benchmark tasks. In particular, while it is clear from the above analysis that sampling over linear combination terms and utilizing small values of $n$ leads to algorithms which require significantly fewer measurements per optimization step than previous approaches, while possessing convergence guarantees in simplified settings, it is not a-priori clear how these algorithms perform on realistic tasks. Specifically, as the convergence properties of the algorithms depend both on the Lipschitz constant of the gradient of the loss, and upper bounds on the variance of the estimator $g^{(t)}(\vt)$, it may be that the advantages of both linear combination sampling and small $n$ implementations are outweighed by slow or unstable convergence behavior in practice. Fortunately, as shown in the following sections, at least in the simplified setting of noiseless devices this is not the case, and small $n$ implementations of the previously introduced algorithms indeed offer multiple advantages over large $n$ approaches, which can be further enhanced through the use of heuristics such as decaying learning rate, and momentum based optimizers \cite{Adam_opt}. All numerical results presented in the following sections were obtained using \texttt{qradient} \cite{qradient}, a custom open-source package for the simulation of hybrid quantum-classical optimization, and all scripts, data and random number seeds are available on request.

\subsection{Stochastic gradient descent for VQE}\label{ss:VQE}

\begin{figure}
\begin{center}
      \includegraphics[width=\columnwidth]{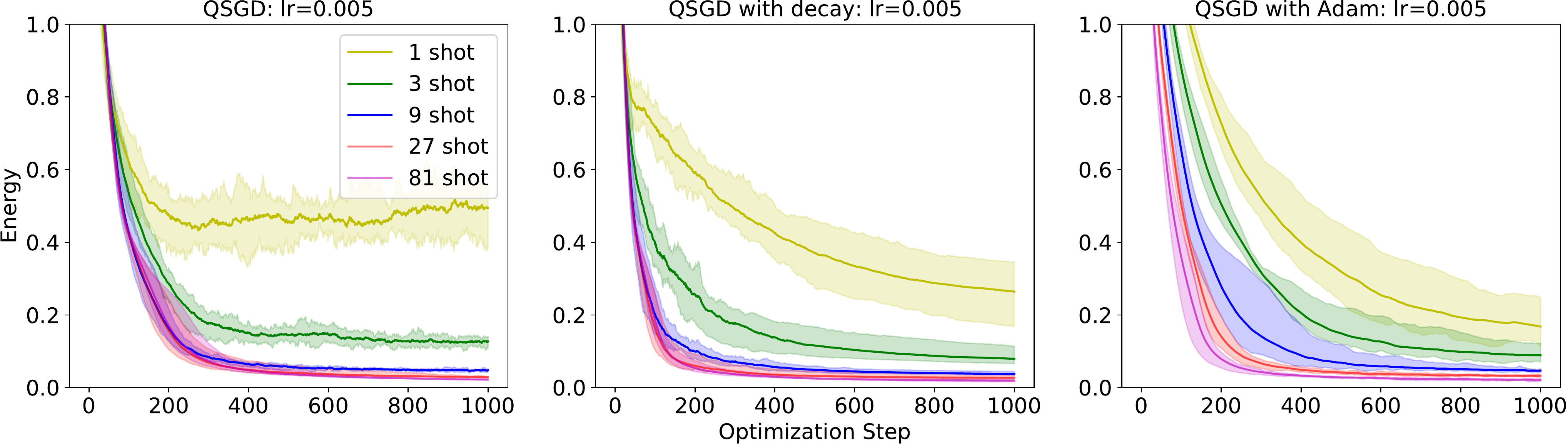}
      \includegraphics[width=\columnwidth]{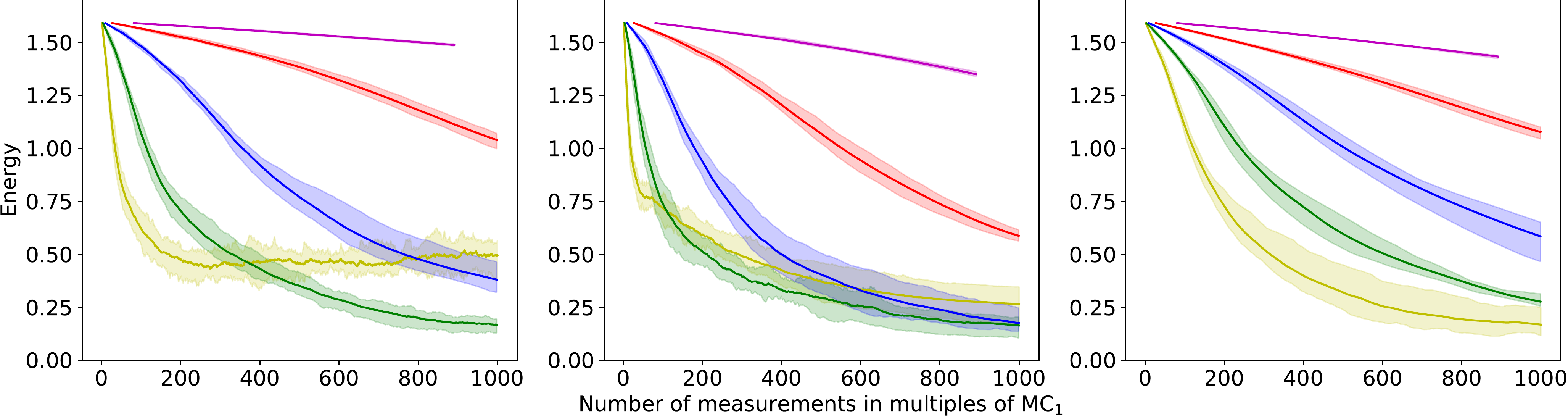}
\end{center}
  \caption{Results of VQE experiments with the estimator \textproc{PartialDerivativeEstimator\_1} from Algorithm \ref{alg:VQE_estimators} (referred to as $n$-shot SGD). The top row of results is plotted with optimization step on the $x$-axis. The lower row of results is plotted with number of measurements on the $x$-axis, in multiples of $\mathrm{MC}_1$, which is the number of measurements required per optimization step by $1$-shot SGD. The left column of figures indicates the results for $n$-shot SGD with fixed learning rate $\alpha = 0.005$. The central column of figures shows the results for $n$-shot SGD with learning rate decay, from an initial learning rate of $\alpha = 0.005$. The right hand column shows the results of  $n$-shot SGD with learning rate decided adaptively by the Adam optimizer \cite{Adam_opt}, starting from an initial learning rate of $\alpha = 0.005$. For each algorithm, and each value of $n$, the experiment has been repeated $8$ times, and the minimum, mean and maximum at each step is displayed. }\label{fig:vqe_results}
\end{figure}

We start by exploring, for different values of $n$, the estimator formalized by the function \textproc{PartialDerivativeEstimator\_1} in Algorithm \ref{alg:VQE_estimators} (which we refer to as $n$-shot SGD), allowing us to gain insight into the effect of efficient expectation value estimation, before introducing the additional stochasticity of Hamiltonian term sampling. In particular, we consider the \emph{critical transverse field Ising model} with open boundary conditions, given by the Hamiltonian
\begin{equation}
H = \sum_{j = 1}^{N-1} Z_jZ_{j+1} + \sum_{j = 1}^{N} X_j,\label{e:vqe_ham}
\end{equation}
with $N=8$, where $X_j$ and $Z_j$ are the respective Pauli operators on site $j$. Critical Ising models are good candidates for such an endeavor, as the entanglement structure of such critical systems prohibits the use of classical tensor network methods for large system sizes \cite{Schuch_MPS}. The parameterized circuit that we consider consists of a sequence of $\sigma$-blocks (with $\sigma \in [X,Y,Z]$), where a single $\sigma$-block has the following structure:

\begin{enumerate}[itemsep=0em]
    \item A layer of parameterized rotations around the $\sigma$ axis -- i.e., an $R_{\sigma}(\theta) = e^{-i\theta/2 \sigma}$ gate acting on all qubits.
    \item A layer of nearest neighbour $\mathrm{CNOT}$ gates with control on qubit $j$ and target on qubit $j+1$, for all even $j$.
    \item A layer of nearest neighbour $\mathrm{CNOT}$ gates with control on qubit $j$ and target on qubit $j+1$, for all odd $j$.
\end{enumerate} 
Note that for one-dimensional circuits consisting of $N$ qubits each block has $N$ free parameters -- i.e., the rotation angles on each qubit. The particular architecture that we consider consists of an initial $Y$-block, with all free parameters set to $\pi/4$, followed by $k$ parameterized $\sigma$-blocks, alternating between $X,Y$ and $Z$ blocks, in that order. A $k$ block circuit of this type therefore has $kN$ free parameters, and we consider a circuit with $50$ blocks, and therefore $400$ free parameters, as $N=8$. An explicit circuit diagram can be found in Appendix \ref{app:parameterized_circuits}. We implemented Algorithm~\ref{alg:stochastic_gradient_base}, calling the estimator \textproc{PartialDerivativeEstimator\_1}, with a fixed learning rate $\alpha = 0.005$, for multiple values of $n$, as well as two heuristic variations:

\begin{enumerate}[itemsep=0em]
    \item \emph{Learning rate decay:} If the energy has not decreased in the last 20 optimization steps, then the learning rate is decreased by a factor of 2.
    \item \emph{Adam optimizer:} A variation of Algorithm \ref{alg:stochastic_gradient_base} which computes adaptive learning rates for each parameter, using the history of prior gradient estimates \cite{Adam_opt}. Implemented with an initial learning rate of $\alpha=0.005$ and hyper-parameters $\beta_1 = 0.9$ and $\beta_2 = 0.999$.
\end{enumerate}
The results of these experiments can be seen in Fig.~\ref{fig:vqe_results}, and there are a variety of interesting aspects to note. Firstly, from the top left figure one can see that, as one might expect, with a \emph{fixed} learning rate the quality of the final solution is strongly effected by the value of $n$. In particular, for very small values of $n$ we expect a large variance in the gradient, even close to a local minima, and the algorithm can therefore only converge to relatively inaccurate solutions. Again as we might expect, the accuracy of the obtained solutions can be successively tightened by increasing the value of $n$, and in the process decreasing the variance of the gradient between optimization steps. By comparing the top left figure with the central and right figure of the top row we can however see that the quality of the final solution can also be significantly improved, even for very small values of $n$, by utilizing adaptive or decaying learning rates, and once again this makes intuitive sense. Once one has converged to a fixed solution, decreasing the learning rate scales the gradient estimator, and therefore effectively decreases the variance in the parameter updates, allowing one to move closer to a local minimum, even with a large variance in the gradient \cite{pmlr-v89-li19c}. As such, we see that both decaying learning rate and increasing shot number allow one to improve the quality of the final solution that is obtained. We note however, that analogously to SGD in purely classical deep learning where variance can be decreased either through scaling via learning rate decay or directly via the batch size, it is not a-priori clear which strategy is optimal \cite{smith2017don}. Additionally, from the top row of results in Fig. \ref{fig:vqe_results} it would appear that even with adaptive learning rate decay, one can achieve better quality solutions by utilizing larger shot numbers, and as such it is not clear that there is something to be gained by using small $n$ estimators.

In order to explore this more clearly, let us define the number of measurements required per optimization step of an $n$-shot algorithm as $\mathrm{MC}_n$ -- where MC stands for \emph{measurement cost}. The bottom row of results in Fig.~\ref{fig:vqe_results} shows the performance of the algorithms not with respect to optimization steps, but with respect to number of measurements performed, as a multiple of $\mathrm{MC}_1$ (the smallest possible number of measurements per optimization step). With respect to this metric, it is clear from the bottom row of figures that in terms of numbers of measurements, single shot ($n=1$) stochastic gradient descent algorithms converge much faster than large $n$ algorithms. 
However, by comparing the upper and lower rows of results one can see that in this setting, for both standard SGD and SGD with decay (left and central panel), in order to improve the quality of the final solution, eventually it is necessary to increase the value of $n$. While the situation is less clear for the Adam optimizer, it also seems likely that increasing $n$ at some point is necessary to achieve higher quality solutions in this case.
As such, a natural strategy, which will be discussed further in Section \ref{s:conclusion}, would be to combine learning rate decay with increasing shot numbers. This should allow one to exploit the rapid convergence, and therefore enhanced measurement efficiency, of small $n$ estimators, while still allowing one to obtain the solutions which are accessible via large $n$ estimators. In fact, recent work has explored precisely such heuristic optimization strategies, with promising results \cite{kubler2019adaptive}.

\begin{figure}
\begin{center}
      \includegraphics[width=.8\columnwidth]{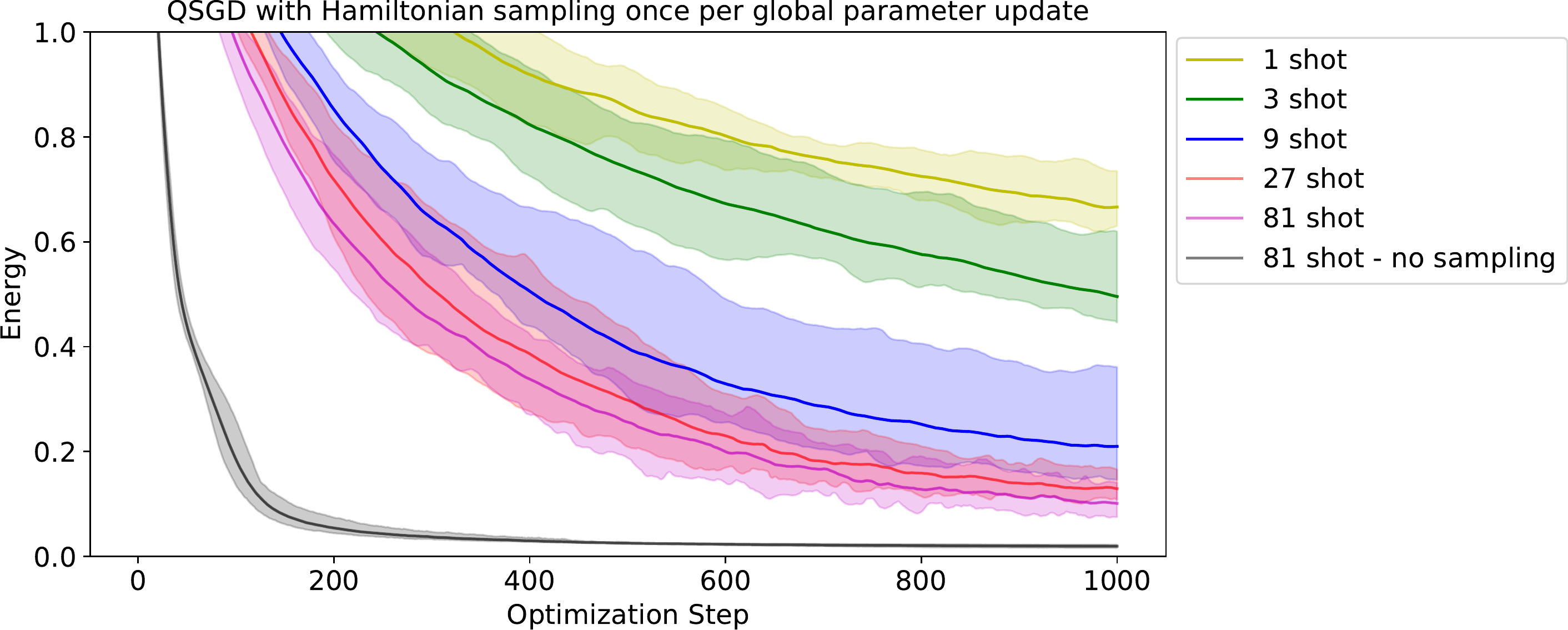}
\end{center}
  \caption{Results of VQE experiments with the estimator \textproc{PartialDerivativeEstimator\_2} from Algorithm \ref{alg:VQE_estimators} -- i.e., $n$-shot stochastic gradient descent incorporating Hamiltonian sampling. The results were obtained using the Adam optimizer, with an initial learning rate of $0.005$. Each experiment has been repeated $8$ times, and the minimum, mean and maximum at each step is displayed. Additionally, the results of 81 shot SGD without Hamiltonian sampling (i.e. using \textproc{PartialDerivativeEstimator\_1}) are shown for comparison.}\label{fig:vqe_results_algs_23}
\end{figure}

Given these insights, we explore in Fig.~\ref{fig:vqe_results_algs_23} the performance of the estimator \textproc{PartialDerivativeEstimator\_2} -- i.e., $n$-shot stochastic gradient descent with Hamiltonian sampling -- in order to explore the additional effect of sampling over linear combinations. As discussed in Section \ref{s:vqe}, in practical settings one would not sample single terms of the Hamiltonian per parameter update, but rather commuting sets of Hamiltonian terms which can be easily measured simultaneously, so that the maximum amount of information can be extracted from a single circuit forward pass. However, for illustration purposes, we explore here the performance of this algorithms when a single local term of the Hamiltonian is sampled, as this is clearly an extreme case, whose performance can only be improved by sampling more terms simultaneously. Once again we implemented Algorithm \ref{alg:stochastic_gradient_base} with learning rate decided by the Adam optimizer, with an initial learning rate of $\alpha = 0.005$. As a result of Theorem \ref{t:SGD}, and from our numerical experiments with \textproc{PartialDerivativeEstimator\_1}, due to the high variance of the estimator in this case, even with large values of $n$, we would expect this algorithm to converge slowly as a function of the number of optimization steps required, and indeed this is what is observed in Fig.~\ref{fig:vqe_results_algs_23}. However, the crucial insight is that although more optimization steps are required, each optimization step of Algorithm \ref{alg:stochastic_gradient_base} requires up to a factor of $M$ less circuit executions (where $M$ is the number of local terms of the Hamiltonian), and using Hamiltonian sampling may well facilitate more rapid convergence or initialization, with respect to the number of circuit executions required. As the precise efficiency gains compared to the simple $n$-shot estimator \textproc{PartialDerivativeEstimator\_1} depend on the strategy via which local Hamiltonian terms are grouped and measured (i.e., $\mathrm{MC}_1$ varies depending on the strategy for obtaining measurements of all local Hamiltonian terms), we have not provided a direct comparison between \textproc{PartialDerivativeEstimator\_1} and \textproc{PartialDerivativeEstimator\_2} in terms of circuit executions and number of measurements, and we leave such comparisons, with state-of-the art strategies for minimizing all involved costs, to future work, emphasizing that this depends on the particular Hamiltonian at hand. Despite this, it is clear from Fig.~\ref{fig:vqe_results_algs_23} that the final solutions obtained when using \textproc{PartialDerivativeEstimator\_2} are relatively far from the optimal solutions obtained by large $n$ versions of  \textproc{PartialDerivativeEstimator\_2}, and as such once again a natural strategy would be to use Hamiltonian sampling as an initialization strategy, with the number of terms (or commuting sets of terms) of the Hamiltonian treated as a hyper-parameter.

\subsection{Stochastic gradient descent for QAOA}\label{ss:QAOA}

\begin{figure}
\begin{center}
      \includegraphics[width=\columnwidth]{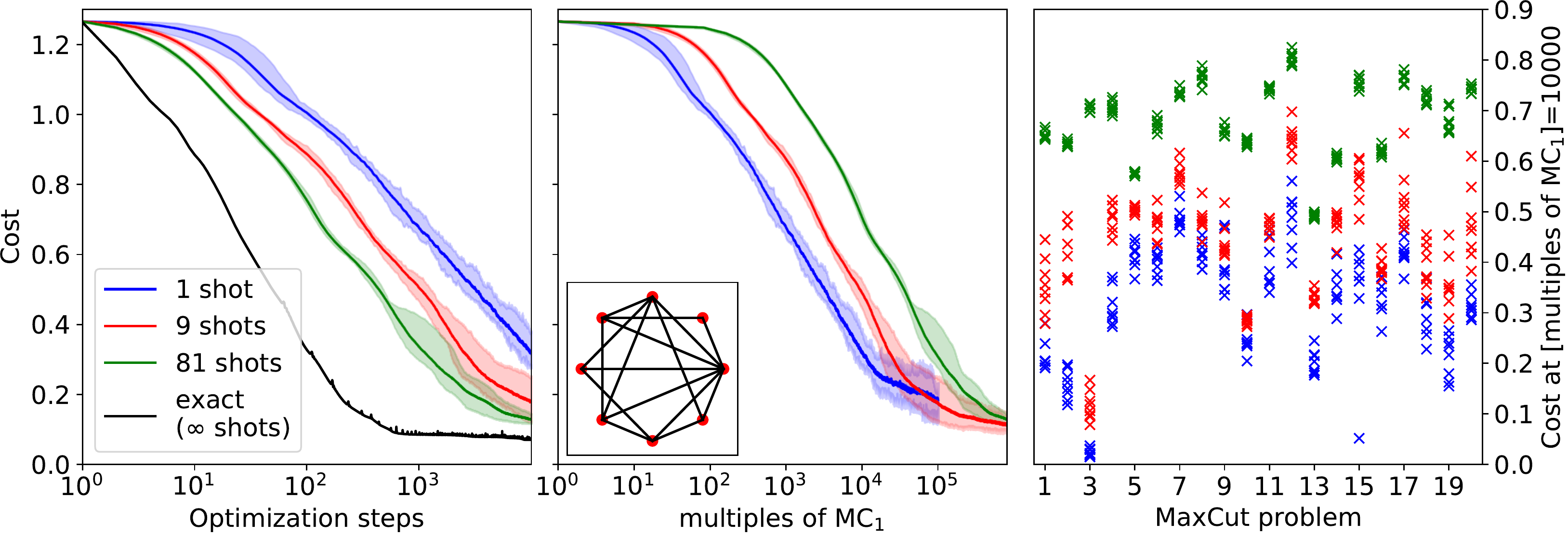}
\end{center}
  \caption{Results of QAOA experiments with $n$-shot SGD (i.e. \textproc{PartialDerivativeEstimator\_1} from Algorithm \ref{alg:VQE_estimators}). The left and center figures show the results obtained for the specific \textsf{MaxCut} instance displayed in the inset. While in the left figure the cost is plotted against the number of optimization steps, in the center figure the cost is plotted against the number of circuit executions, up to a constant factor $MC_1$. The right most figure shows the results obtained by $n$-shot SGD after 10000MC1 measurements, on 20 randomly drawn \textsf{MaxCut} instances, all with $\vert V \vert = 8$ and $\vert E \vert = 16$. As the ground state energy of each randomly drawn problem instance $H^P$ may be different (dependent on the number of maximum possible cuts), the cost refers to the energy $\braket{0 \vert U^\dagger(\theta) H^P U(\theta) \vert 0}$, normalized by the absolute value of the ground state energy for the particular problem instance, and shifted by +1. This rescaling ensures the minimum achievable cost for all problem instances is zero, and allows for meaningful comparison between problem instances. }\label{fig:qaoa_results}
\end{figure}

As discussed in Section \ref{s:qaoa}, if one does not sample over linear combinations of local Hamiltonian terms, or over parameter shift terms, then the unbiased estimator formalized via \textproc{PartialDerivativeEstimator\_1} from Algorithm \ref{alg:VQE_estimators} can be directly applied in the context of the QAOA circuit ansatz, by simply re-interpreting the double sum from the parameter shift rule as a single sum over a multi-index. In this section we explore the performance of this estimator, again for different values of $n$, on various instances of the \textsf{MaxCut} problem \cite{MaxCut}. Specifically, given a graph $G = (V,E)$, the \textsf{MaxCut} problem is equivalent to finding the ground state of the Hamiltonian
\begin{equation}
    \label{eq:qaoa}
    H^P = \sum_{(i,j) \in E} Z_iZ_j.
\end{equation}
As mentioned in Section \ref{s:qaoa}, the QAOA ansatz alternates application of the problem Hamiltonian with a mixing Hamiltonian $H^B$ via 
\begin{equation}
    U(\vt) = [e^{-i\theta_dH^B}e^{-i\theta_{d-1}H^P}] \ldots    [e^{-i\theta_2H^B}e^{-i\theta_1H^P}].
\end{equation}
For all experiments performed here, a value of $d=100$ has been used, with the mixing Hamiltonian $H^B = \sum_{j = 1}^{|V|}X_j.$ Fig.~\ref{fig:qaoa_results} shows the results obtained by using the above QAOA ansatz, in conjunction with the estimator \textproc{PartialDerivativeEstimator\_1}, on 20 randomly drawn instances of the \textsf{MaxCut} problem, all with $|V| =8$ vertices, and $|E| = 16$ edges. For all instances, the parameters were initialized such that they realize a simple linear interpolation between $H^P$ and $H^B$ -- i.e., for odd $j$ (parameters of $H^P$ terms) we have that $\theta_j = j/d$ while for even $j$ (parameters of $H^B$ terms) we have that $\theta_j = 1 - j/d$. This initialization is informed by recent studies which identified global optima for \textsf{MaxCut} problems, which were typically close to linear interpolation solutions \cite{zhou2018quantum}.
The optimization was carried out using the \textit{Adam} optimizer, with initial learning rate $\alpha = 0.001$ and hyper-parameters $\beta_1 = 0.8 $ and $\beta_2 = 0.999$. Once again we are able to draw various insights from the results of Fig.~\ref{fig:qaoa_results}, which are similar in nature to the conclusions from our VQE simulations. In particular, by comparing the left and center panels of Fig.~\ref{fig:qaoa_results}, which show the convergence behavior with respect to optimization steps and measurement cost respectively for one specific \textsf{MaxCut} instance, we once again see that while 1-shot SGD converges slower in terms of the number of optimization steps required, it converges faster in terms of the number of measurements required (where once again we have plotted the number of measurements in multiples of $MC_1$, the measurement cost per optimization step of single shot SGD). The rightmost panel of Fig.~\ref{fig:qaoa_results} confirms that this behavior is indeed generic. Specifically, the right most panel shows the cost obtained by $n$-shot SGD after $10000MC_1$ measurements, on 20 randomly drawn \textsf{MaxCut} instances, and it is clear that for all instances, single shot SGD obtains (often significantly) lower costs with this number of measurements, and hence should clearly be used as an initialization strategy. Given this initialization strategy the second natural question is then whether learning rate decay is sufficient to decrease the variance of the parameter updates such that highly accurate solutions can be obtained, or whether in addition it is necessary to increase the shot number. From the central panel (in which the number of steps is limited by computational resources) the fact that the 9-shot curve intersects the single shot curve suggests that, for this particular example, the 9-shot algorithm is likely to converge to a more accurate solution than the single shot algorithm. As such, if one begins with the single shot algorithm, then it is possible that increasing the number of measurements at some point would allow one to obtain a more accurate solution than one that could be obtained by keeping the number of measurements constant and simply decreasing the learning rate further. Given this, as per the VQE setting, in order to achieve the most accurate solution with the minimum number of measurements, a promising strategy is to combine learning rate decay with an increasing number of measurements.

\subsection{Stochastic gradient descent for MSE classification}

In order to investigate the performance of $n$-shot doubly stochastic gradient descent for MSE loss functions on a realistic task, we consider the problem of image classification. In particular, we consider a binary classification task defined by a data-set consisting of down-sampled grayscale $3$'s and $6$'s from the MNIST data-set. To be specific, each selected MNIST image has been down-sampled from $28\times 28$ pixels to $8\times 8$ pixels by first removing 6 pixels from all boundaries, and then removing every second row and every second column. Each image was then flattened into a 64 dimensional vector, and the pixel values were normalized so that the 2-norm of each image vector was 1. Each image was then encoded into an $8$ qubit state via amplitude encoding \cite{SchuldClassification, schuld2018supervised}. The training data-set consisted of 2000 $3$'s (labelled with $y=1$) and 2000 $6$'s (labeled with $y = -1$), while the validation data-set consisted of 200 $3$'s and $200$ $6$'s. We have utilized the same parameterized circuit architecture as in the previous section (with $6$ qubits and 18 $\sigma$-blocks) followed by a measurement of $Z_1$, where given an image instance $x$ the model has been defined via
\begin{equation}\label{e:model}
f(x) =   \begin{cases} 
   1 & \text{if } \langle Z_1 \rangle \geq 0 \\
   -1       & \text{else }.
  \end{cases}
\end{equation}

\noindent The loss function has been as per Eq.~\eqref{e:mse_data-set_loss}, with $O = Z_1$, and we investigated the performance of the estimator defined in Algorithm \ref{alg:MSE_loss functions} -- resulting in what we refer to as $n$-shot doubly stochastic gradient descent (DSGD) -- for different \emph{fixed} learning rates and values of $n$, as can be seen in Fig.~\ref{fig:classifier_results}. In particular, in order to provide a benchmark the top row of Fig.~\ref{fig:classifier_results} shows the results for ``singly''-stochastic gradient descent -- i.e., mini-batch gradient descent with batch size 1 and \emph{exact} expectation values (which, as discussed, is not practically feasible), while the bottom row of Fig.~\ref{fig:classifier_results} shows the results for $n$-shot doubly stochastic gradient descent, also with batch size 1.  An important thing to note is that because of the way the model is defined in Eq.~(\ref{e:model}), a high confidence (large shot number $n$) estimate of $\langle Z_1\rangle$ is required in order to make a prediction with the model, even though, as we have discussed at length, such an estimate is not required to implement the doubly stochastic optimization algorithm. 

Additionally, we note that in order to avoid overfitting, a typical supervised learning procedure involves dividing the optimization into ``epochs"  -- a single pass of the optimization algorithm through the training data set -- each of which is followed by an evaluation of the optimization procedure, and a subsequent determination of convergence, by calculating the model accuracy on the validation data-set \cite{Goodfellow-et-al-2016}. In light of the prior observations concerning the shot numbers necessary for high-confidence model predictions, a natural training strategy in this setting is therefore to train the algorithm for a single epoch, using low shot number estimates and without making simultaneous predictions on the training data, before then evaluating the accuracy of the model on the (typically smaller) validation set, using large shot number estimates of the expectation value to make predictions.
In order to reflect this strategy, and the information one would then have available in a practical implementation of this algorithm, for the ``singly"-stochastic gradient descent, in which exact expectation values are used in training, we have plotted both the training and validation set accuracy after each epoch, while for the $n$-shot doubly stochastic gradient descent we have plotted only the validation set accuracy, which was evaluated after each training epoch by using exact expectation values (as a proxy for the much higher shot numbers one would use in practice).

Once again, there are a variety of lessons to be taken from the results of Fig.~\ref{fig:classifier_results}. Firstly, as can be seen by comparing the columns, it is clear that as in the purely classical case, the learning rate plays a large role, and care should be taken in estimating this hyper-parameter. In these experiments, we see that for the learning rates $\alpha = 0.005$ and $\alpha = 0.0005$, the results from $n$-shot doubly  and exact expectation value SGD do not differ significantly. As in the VQE setting, the $n$-shot DSGD algorithms require more \emph{epochs} to converge, but as we have seen in the previous section, as each epoch is significantly cheaper for small $n$ DSGD, it is again clear that single shot DSGD should definitely be used to rapidly find approximate solutions, before possibly increasing $n$, in conjunction with learning rate decay, to increase the quality of the solution. In this case the enhanced efficiency of small $n$ estimators is amplified by the size of the data-set, as each epoch consists of a single optimization step per data-set instance, and thus the measurement efficiency improvements per optimization step should be multiplied by the size of the data-set.

\begin{figure}
\begin{center}
      \includegraphics[width=\columnwidth]{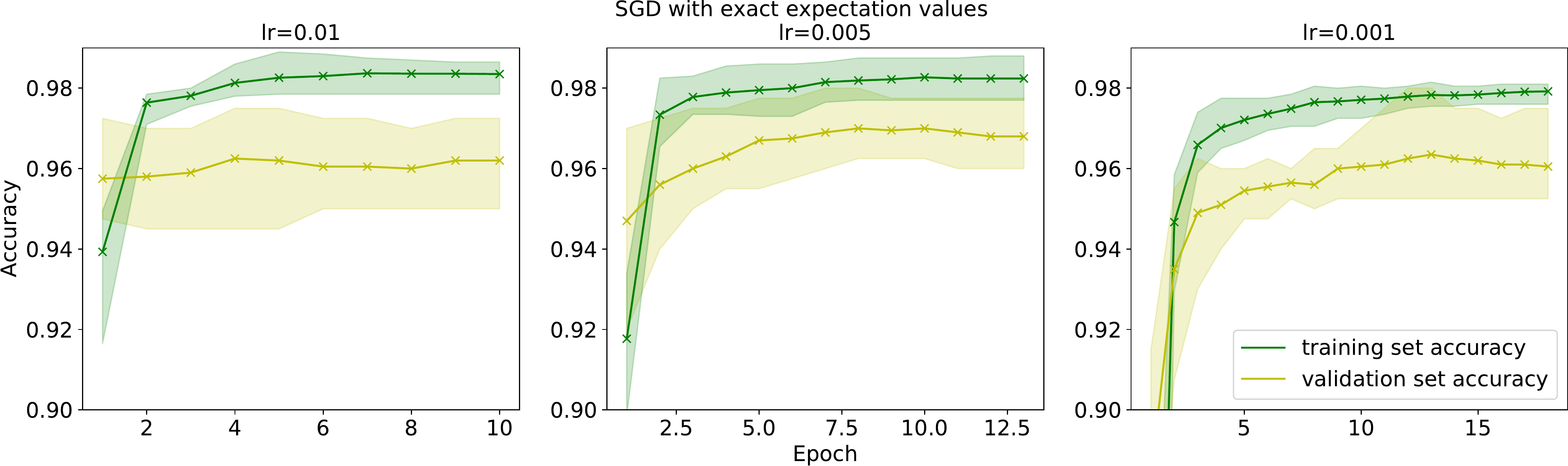}
      \includegraphics[width=\columnwidth]{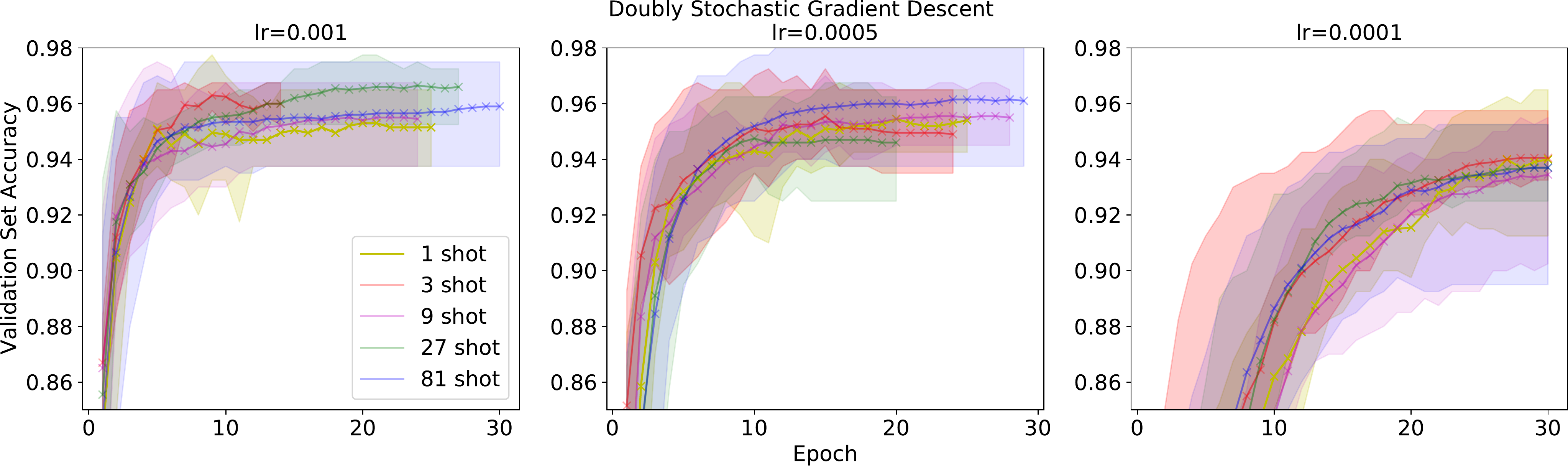}
\end{center}
  \caption{Results of downsampled MNIST binary ($3$ vs $6$) classification experiments. The results from $n$-shot doubly stochastic gradient descent (i.e. using the function \textproc{PartialDerivativeEstimator\_4} from Algorithm \ref{alg:MSE_loss functions}) are shown in the lower row, and  the results from exact expectation value stochastic gradient descent are shown in the upper row, both with batch-size = 1. All results were obtained using a fixed learning rate, displayed in the title of the respective subplot. All experiments were repeated 8 times, and the mean, max and min in each step is displayed. A particular experiment was decided to have converged if no improvement in the validation set accuracy was achieved in the previous 5 epochs, and for each shot number and learning rate pair the curve is plotted up until the epoch number which was reached by the slowest experiment to converge.}\label{fig:classifier_results}
\end{figure}

\section{Discussion and Conclusion}\label{s:conclusion}

In light of the above analysis and results, it is possible to draw a variety of observations and conclusions. Firstly, in the context of hybrid quantum-classical optimization, in which both loss functions and their gradients can be expressed as functions of expectation values, \emph{exact} gradient descent is not possible. As such, all optimization schemes which in practice require the estimation of expectation values from a finite number of measurement results should be considered within the framework of \emph{stochastic} gradient descent. In this work we have formalized this notion for VQE, QAOA and MSE classification problems, from which it is clear that previous approaches based on approximating expectation values with $n$ measurement outcomes can be understood as instances of well defined stochastic gradient descent type algorithms, which are valid for all $n$, and in particular, even for $n=1$. It is therefore not a-priori necessary to perform large numbers of measurements when implementing these algorithms in practice. Additionally, as is done classically in mini-batch stochastic gradient descent, whenever the loss function consists of a linear combination of terms which can be estimated in an unbiased way it is possible to obtain a new ``doubly stochastic" optimizer by sampling over terms of the linear combination. In the context of hybrid quantum-classical optimization such linear combinations appear from a variety of sources, such as sums over local Hamiltonian terms, sums over parameter shift terms, or sums over data-set instances. Exploiting this insight has allowed us to define multiple doubly stochastic gradient descent algorithms, for practically relevant settings such as VQE, QAOA and quantum classifiers.  Moreover, while convergence guarantees for realistic non-convex landscapes remain an open question, for simplified landscapes convergence guarantees are available for all of the algorithms discussed, which provide a minimum measure of confidence in their construction.  

From these convergence guarantees it is clear that even in restricted settings the accuracy of the solution to which a given stochastic gradient descent algorithm will converge is dependent on the variance of the estimator for the gradient. It is natural to expect that both small $n$ implementations of the algorithms we have introduced, as well as algorithms which involve samples over (possibly multiple) linear combinations, may not achieve solutions which are as fine-tuned as those obtained by large $n$ algorithms. From the numerical experiments we have performed in Section \ref{s:results} we see that this is indeed the case, however there are two important things to note. Firstly, adaptive heuristic learning rate schemes can successfully compensate for the variance of gradient estimators, and significantly improve the quality of solutions, even for very high variance estimators. Secondly, we see that on benchmark practical settings both small $n$ stochastic gradient descent algorithms, and algorithms that sample over linear combinations of terms, can converge orders of magnitude faster, with respect to the total number of measurements required. As a result, even though the solutions converged to are not optimal, it is well advised to utilize small $n$ SGD algorithms -- possibly with linear combination sampling -- as initializers which are able to rapidly find approximate solutions, using very few measurements relative to the number required by large $n$ algorithms to reach the same point. Once convergence has been achieved for small $n$ with sampling, optimization can be continued with a combination of learning rate decay and slowly increasing values of $n$, in order to fine tune the solutions. 

Given the results presented here, there remain a variety of directions to explore. From an analytical perspective, it is naturally of interest to construct unbiased estimators for loss functions such as the 
\emph{log-likelihood} and the \emph{cross entropy}. However, as previously discussed, despite the absence of rigorous constructions for \emph{unbiased} estimators for non-polynomial loss functions, one can  use both efficient expectation value estimation and sampling over linear combinations to constructed biased estimators for these loss functions, whose performance as heuristic strategies may be worth investigating further, and for which one still may be able to design optimization algorithms with rigorous convergence guarantees \cite{biased}. Additionally, It is also desirable to obtain convergence guarantees for settings in which the Polyak-Lojasiewicz inequality is not satisfied, although this is expected to require significantly new techniques, which are sought after by the classical machine learning community. It should also be noted that the results and experiments presented here have all neglected the effects of noise, and understanding how realistic noise influences these results is of natural importance for developing optimization methods for use in conjunction with existing devices. One possible scenario is that realistic device noise will lead to estimators which remain unbiased, but with a larger variance. In this case, the strategies suggested here remain valid, however one may ultimately require smaller learning rates and larger shot numbers for fine-tuning solutions. If however realistic device noise introduces a bias into the estimators, then as mentioned earlier, one could still apply the strategies and optimization algorithms developed here as heuristics, and it may also be possible to design more complicated algorithms which still admit convergence guarantees \cite{biased}. In order to understand the role of noise more fully, it would therefore be of interest both to study the performance of these algorithms on existing devices and to integrate realistic noise models into the analysis. In fact, along precisely these lines, very recently the authors of Ref. \cite{gentini2019noise} have shown that counter to intuition noise may in fact be beneficial for stochastic gradient descent optimization, and it is hoped that this work stimulates further future research in these directions. Finally, it would be interesting to combine the algorithms presented here with additional techniques for reducing the number of circuit executions and forward passes. Very recent work has in fact introduced a variety of heuristic hyper-parameter adaptation schemes for measurement shot adaptation \cite{kubler2019adaptive}, and it will be of interest to investigate the extension of these techniques to the additional algorithms we have presented here. It is also worthwhile to further exploit structure such as sparsity in the gradients estimated.

On a higher level, we hope that the rigorous framework provided in this work contributes in a significant fashion to the growing body of analytical work on variational quantum-classical hybrid algorithms and algorithms with applications in quantum-enhanced machine learning, complementing a -- to date -- still largely empirical field of research. At the same time, we hope that our work provides further perspectives to find practical applications of near-term quantum devices, let this be \emph{near-term quantum circuits} 
\cite{GoogleQubits,IBMQubits} or instances of \emph{programmable quantum simulators} \cite{InnsbruckQAOA,ProbingQuantumSimulator,Monroe}, beyond showing conceptually interesting 
quantum advantages or ``supremacy'' \cite{bremner_average-case_2016, BosonSampling, neill_blueprint_2017,NewSupremacy,GWD16, arute2019quantum}. Optimized schemes of the kind developed here may help in the search of finding such applications.

\begin{acknowledgments}
FW acknowledges funding from the DFG under Germany's Excellence Strategy – MATH+: The Berlin Mathematics Research Center, EXC-2046/1 – project ID: 390685689, and would like to thank Zacharias V. Fisches for insightful discussions on techniques in deep learning and numerical methods. RS acknowledges the support of the Alexander von Humboldt foundation, and would like to acknowledge the Quantum Excellence in Diversity (QuEDiver) initiative for facilitating a research visit of MS to the FU Berlin.
JE has been supported by the BMWi (PlanQK), the 
ERC (TAQ), the  Templeton  Foundation,  and  the  
DFG  (EI 519/14-1,  EI 519/15-1,  CRC 183) and MATH+.   
This work has also received funding from the European Union’s Horizon 2020 research and innovation programme under grant agreement No.~817482 (PASQuanS).
\end{acknowledgments}	

\bibliographystyle{abbrvunsrtnat}
\bibliography{literature.bib}

\begin{thebibliography}{60}
\providecommand{\natexlab}[1]{#1}
\providecommand{\url}[1]{\texttt{#1}}
\expandafter\ifx\csname urlstyle\endcsname\relax
  \providecommand{\doi}[1]{doi: #1}\else
  \providecommand{\doi}{doi: \begingroup \urlstyle{rm}\Url}\fi

\bibitem[McClean et~al.(2016)McClean, Romero, Babbush, and
  Aspuru-Guzik]{Hybrids}
J.~R. McClean, J.~Romero, R.~Babbush, and A.~Aspuru-Guzik.
\newblock The theory of variational hybrid quantum-classical algorithms.
\newblock \emph{New J. Phys.}, 18:\penalty0 23023, 2016.
\newblock \doi{10.1088/1367-2630/18/2/023023}.

\bibitem[Preskill(2018)]{Preskill2018quantumcomputingin}
J.~Preskill.
\newblock Quantum {C}omputing in the {NISQ} era and beyond.
\newblock \emph{Quantum}, 2:\penalty0 79, August 2018.
\newblock \doi{10.22331/q-2018-08-06-79}.

\bibitem[Peruzzo et~al.(2014)Peruzzo, McClean, Shadbolt, Yung, Zhou, Love,
  Aspuru-Guzik, and O'Brien]{peruzzo2014variational}
A.~Peruzzo, J.~McClean, P.~Shadbolt, M.-H. Yung, X.-Q. Zhou, P.~J. Love,
  A.~Aspuru-Guzik, and J.~L. O'Brien.
\newblock A variational eigenvalue solver on a photonic quantum processor.
\newblock \emph{Nature Comm.}, 5\penalty0 (1), 2014.
\newblock \doi{10.1038/ncomms5213}.

\bibitem[Farhi et~al.(2014)Farhi, Goldstone, and Gutmann]{farhi2014quantum}
E.~Farhi, J.~Goldstone, and S.~Gutmann.
\newblock A quantum approximate optimization algorithm.
\newblock arXiv:1411.4028, 2014.

\bibitem[Schuld et~al.(2020)Schuld, Bocharov, Svore, and
  Wiebe]{SchuldClassification}
M.~Schuld, A.~Bocharov, K.~M. Svore, and N.~Wiebe.
\newblock Circuit-centric quantum classifiers.
\newblock \emph{Physical Review A}, 101\penalty0 (3):\penalty0 032308, 2020.
\newblock \doi{10.1103/PhysRevA.101.032308}.

\bibitem[Farhi and Neven(2018)]{FarhiClassification}
E.~Farhi and H.~Neven.
\newblock Classification with quantum neural networks on near term processors.
\newblock 2018.
\newblock arxiv:1802.06002.

\bibitem[Benedetti et~al.(2019)Benedetti, Lloyd, Sack, and
  Fiorentini]{benedetti2019parameterized}
M.~Benedetti, E.~Lloyd, S.~Sack, and M.~Fiorentini.
\newblock Parameterized quantum circuits as machine learning models.
\newblock \emph{Quantum Science and Technology}, 4\penalty0 (4):\penalty0
  043001, 2019.
\newblock \doi{10.1088/2058-9565/ab4eb5}.

\bibitem[Zhu et~al.(2019)Zhu, Linke, Benedetti, Landsman, Nguyen, Alderete,
  Perdomo-Ortiz, Korda, Garfoot, Brecque, et~al.]{zhu2018training}
D.~Zhu, N.~M. Linke, M.~Benedetti, K.~A. Landsman, N.~H. Nguyen, C.~H.
  Alderete, A.~Perdomo-Ortiz, N.~Korda, A.~Garfoot, C.~Brecque, et~al.
\newblock Training of quantum circuits on a hybrid quantum computer.
\newblock \emph{Science advances}, 5\penalty0 (10):\penalty0 eaaw9918, 2019.
\newblock \doi{10.1126/sciadv.aaw9918}.

\bibitem[Goodfellow et~al.(2016)Goodfellow, Bengio, and
  Courville]{Goodfellow-et-al-2016}
I.~Goodfellow, Y.~Bengio, and A.~Courville.
\newblock \emph{Deep Learning}.
\newblock MIT Press, 2016.
\newblock \url{http://www.deeplearningbook.org}.

\bibitem[Kandala et~al.(2017{\natexlab{a}})Kandala, Mezzacapo, Temme, Takita,
  Brink, Chow, and Gambetta]{kandala2017hardware}
A.~Kandala, A.~Mezzacapo, K.~Temme, M.~Takita, M.~Brink, J.~M. Chow, and J.~M.
  Gambetta.
\newblock Hardware-efficient variational quantum eigensolver for small
  molecules and quantum magnets.
\newblock \emph{Nature}, 549\penalty0 (7671):\penalty0 242--246,
  2017{\natexlab{a}}.
\newblock \doi{10.1038/nature23879}.

\bibitem[Harrow and Napp(2019)]{harrow_napp}
A.~Harrow and J.~Napp.
\newblock Low-depth gradient measurements can improve convergence in
  variational hybrid quantum-classical algorithms.
\newblock \emph{arXiv:1901.05374}, 2019.

\bibitem[Gily{\'e}n et~al.(2019)Gily{\'e}n, Arunachalam, and
  Wiebe]{gilyen2019optimizing}
A.~Gily{\'e}n, S.~Arunachalam, and N.~Wiebe.
\newblock Optimizing quantum optimization algorithms via faster quantum
  gradient computation.
\newblock In \emph{Proceedings of the Thirtieth Annual ACM-SIAM Symposium on
  Discrete Algorithms}, pages 1425--1444. SIAM, 2019.
\newblock \doi{10.1137/1.9781611975482.87}.

\bibitem[Verdon et~al.(2018)Verdon, Pye, and Broughton]{verdon2018universal}
G.~Verdon, J.~Pye, and M.~Broughton.
\newblock A universal training algorithm for quantum deep learning.
\newblock 2018.
\newblock arXiv:1806.09729.

\bibitem[Bergholm et~al.(2018)Bergholm, Izaac, Schuld, Gogolin, Alam, Ahmed,
  Arrazola, Blank, Delgado, Jahangiri, McKiernan, Meyer, Niu, Száva, and
  Killoran]{bergholm2018pennylane}
V.~Bergholm, J.~Izaac, M.~Schuld, C.~Gogolin, M.~S. Alam, S.~Ahmed, J.~M.
  Arrazola, C.~Blank, A.~Delgado, S.~Jahangiri, K.~McKiernan, J.~J. Meyer,
  Z.~Niu, A.~Száva, and N.~Killoran.
\newblock Pennylane: Automatic differentiation of hybrid quantum-classical
  computations.
\newblock 2018.
\newblock arXiv:1811.04968.

\bibitem[Mitarai et~al.(2018)Mitarai, Negoro, Kitagawa, and
  Fujii]{mitarai2018quantum}
K.~Mitarai, M.~Negoro, M.~Kitagawa, and K.~Fujii.
\newblock Quantum circuit learning.
\newblock \emph{Phys. Rev. A}, 98\penalty0 (3):\penalty0 32309, 2018.
\newblock \doi{10.1103/PhysRevA.98.032309}.

\bibitem[Schuld et~al.(2019)Schuld, Bergholm, Gogolin, Izaac, and
  Killoran]{schuld2019evaluating}
M.~Schuld, V.~Bergholm, C.~Gogolin, J.~Izaac, and N.~Killoran.
\newblock Evaluating analytic gradients on quantum hardware.
\newblock \emph{Phys. Rev. A}, 99\penalty0 (3):\penalty0 32331, 2019.
\newblock \doi{10.1103/PhysRevA.99.032331}.

\bibitem[Kandala et~al.(2017{\natexlab{b}})Kandala, Mezzcapo, Temme, Takita,
  Brink, Chow, and Gambetta]{Kandala}
A.~Kandala, A.~Mezzcapo, K.~Temme, M.~Takita, M.~Brink, J.~W. Chow, and J.~M.
  Gambetta.
\newblock Hardware-efficient variational quantum eigensolver for small
  molecules and quantum magnets.
\newblock \emph{Nature}, 549:\penalty0 242, 2017{\natexlab{b}}.
\newblock \doi{10.1038/nature23879}.

\bibitem[Leyton-Ortega et~al.(2019)Leyton-Ortega, Perdomo-Ortiz, and
  Perdomo]{leyton2019robust}
V.~Leyton-Ortega, A.~Perdomo-Ortiz, and O.~Perdomo.
\newblock Robust implementation of generative modeling with parametrized
  quantum circuits.
\newblock \emph{arXiv:1901.08047}, 2019.

\bibitem[Havl{\'\i}{\v{c}}ek et~al.(2019)Havl{\'\i}{\v{c}}ek, C{\'o}rcoles,
  Temme, Harrow, Kandala, Chow, and Gambetta]{havlivcek2019supervised}
V.~Havl{\'\i}{\v{c}}ek, A.~D. C{\'o}rcoles, K.~Temme, A.~W. Harrow, A.~Kandala,
  J.~M. Chow, and J.~M. Gambetta.
\newblock Supervised learning with quantum-enhanced feature spaces.
\newblock \emph{Nature}, 567\penalty0 (7747):\penalty0 209--212, 2019.
\newblock \doi{10.1038/s41586-019-0980-2}.

\bibitem[Shalev-Shwartz and Ben-David(2014)]{Shalev}
S.~Shalev-Shwartz and S.~Ben-David.
\newblock \emph{Understanding machine learning: From theory to algorithms}.
\newblock Cambridge University Press, New York, NY, USA, 2014.
\newblock ISBN 1107057132, 9781107057135.

\bibitem[Karimi et~al.(2016)Karimi, Nutini, and Schmidt]{karimi2016linear}
H.~Karimi, J.~Nutini, and M.~Schmidt.
\newblock Linear convergence of gradient and proximal-gradient methods under
  the polyak-{\l}ojasiewicz condition.
\newblock In \emph{Joint European Conference on Machine Learning and Knowledge
  Discovery in Databases}, pages 795--811. Springer, 2016.
\newblock \doi{10.1007/978-3-319-46128-1_50}.

\bibitem[Bottou(2010)]{bottou2010large}
L.~Bottou.
\newblock Large-scale machine learning with stochastic gradient descent.
\newblock In \emph{Proceedings of COMPSTAT'2010}, pages 177--186. Springer,
  2010.
\newblock \doi{10.1007/978-3-7908-2604-3_16}.

\bibitem[Bottou and Bousquet(2008)]{bottou2008tradeoffs}
L.~Bottou and O.~Bousquet.
\newblock The tradeoffs of large scale learning.
\newblock In \emph{Advances in neural information processing systems}, pages
  161--168, 2008.

\bibitem[Kleinberg et~al.(2018)Kleinberg, Li, and
  Yuan]{kleinberg2018alternative}
R.~Kleinberg, Y.~Li, and Y.~Yuan.
\newblock An alternative view: When does sgd escape local minima?
\newblock 2018.
\newblock arXiv:1802.06175.

\bibitem[Ruder(2016)]{ruder2016overview}
S.~Ruder.
\newblock An overview of gradient descent optimization algorithms.
\newblock \emph{arXiv:1609.04747}, 2016.

\bibitem[Dai et~al.(2014)Dai, Xie, He, Liang, Raj, Balcan, and
  Song]{dai2014scalable}
B.~Dai, B.~Xie, N.~He, Y.~Liang, A.~Raj, M.-F.~F. Balcan, and L.~Song.
\newblock Scalable kernel methods via doubly stochastic gradients.
\newblock In \emph{Advances in Neural Information Processing Systems}, pages
  3041--3049, 2014.

\bibitem[Li and P\'{o}czos(2016)]{ds2}
C.-L. Li and B.~P\'{o}czos.
\newblock Utilize old coordinates: Faster doubly stochastic gradients for
  kernel methods.
\newblock In \emph{Proceedings of the Thirty-Second Conference on Uncertainty
  in Artificial Intelligence}, UAI’16, page 467–476, Arlington, Virginia,
  USA, 2016. AUAI Press.
\newblock ISBN 9780996643115.

\bibitem[K{\"u}bler et~al.(2020)K{\"u}bler, Arrasmith, Cincio, and
  Coles]{kubler2019adaptive}
J.~M. K{\"u}bler, A.~Arrasmith, L.~Cincio, and P.~J. Coles.
\newblock An adaptive optimizer for measurement-frugal variational algorithms.
\newblock \emph{Quantum}, 4:\penalty0 263, 2020.
\newblock \doi{10.22331/q-2020-05-11-263}.

\bibitem[Bottou(1991)]{bottou1991stochastic}
L.~Bottou.
\newblock Stochastic gradient learning in neural networks.
\newblock \emph{Proceedings of Neuro-N{\i}mes}, 91\penalty0 (8):\penalty0 12,
  1991.

\bibitem[Zinkevich et~al.(2010)Zinkevich, Weimer, Li, and
  Smola]{zinkevich2010parallelized}
M.~Zinkevich, M.~Weimer, L.~Li, and A.~J. Smola.
\newblock Parallelized stochastic gradient descent.
\newblock In \emph{Advances in neural information processing systems}, pages
  2595--2603, 2010.

\bibitem[Recht et~al.(2011)Recht, Re, Wright, and Niu]{recht2011hogwild}
B.~Recht, C.~Re, S.~Wright, and F.~Niu.
\newblock Hogwild: A lock-free approach to parallelizing stochastic gradient
  descent.
\newblock In \emph{Advances in neural information processing systems}, pages
  693--701, 2011.

\bibitem[Li and Orabona(2019)]{pmlr-v89-li19c}
X.~Li and F.~Orabona.
\newblock On the convergence of stochastic gradient descent with adaptive
  stepsizes.
\newblock In K.~Chaudhuri and M.~Sugiyama, editors, \emph{Proceedings of
  Machine Learning Research}, volume~89 of \emph{Proceedings of Machine
  Learning Research}, pages 983--992. PMLR, 16--18 Apr 2019.
\newblock URL \url{http://proceedings.mlr.press/v89/li19c.html}.

\bibitem[Campbell(2019)]{Earl}
E.~Campbell.
\newblock Random compiler for fast hamiltonian simulation.
\newblock \emph{Phys. Rev. Lett.}, 123:\penalty0 70503, 2019.
\newblock \doi{10.1103/PhysRevLett.123.070503}.

\bibitem[Gokhale et~al.(2019)Gokhale, Angiuli, Ding, Gui, Tomesh, Suchara,
  Martonosi, and Chong]{gokhale2019minimizing}
P.~Gokhale, O.~Angiuli, Y.~Ding, K.~Gui, T.~Tomesh, M.~Suchara, M.~Martonosi,
  and F.~T. Chong.
\newblock Minimizing state preparations in variational quantum eigensolver by
  partitioning into commuting families.
\newblock 2019.
\newblock arXiv:1907.13623.

\bibitem[Raeisi et~al.(2012)Raeisi, Wiebe, and Sanders]{Wiebe}
S.~Raeisi, N.~Wiebe, and B.~C. Sanders.
\newblock Quantum-circuit design for efficient simulations of many-body quantum
  dynamics.
\newblock \emph{New J. Phys.}, 14:\penalty0 103017, 2012.
\newblock \doi{10.1088/1367-2630/14/10/103017}.

\bibitem[Lee(2019)]{lee2019u}
A.~J. Lee.
\newblock \emph{U-statistics: Theory and Practice}.
\newblock Routledge, 2019.

\bibitem[Chen and Luss(2018)]{biased}
J.~Chen and R.~Luss.
\newblock Stochastic gradient descent with biased but consistent gradient
  estimators.
\newblock \emph{arXiv:1807.11880}, 2018.

\bibitem[Nguyen et~al.(2018)Nguyen, Nguyen, and van Dijk]{nguyen2018tight}
P.~H. Nguyen, L.~M. Nguyen, and M.~van Dijk.
\newblock Tight dimension independent lower bound on optimal expected
  convergence rate for diminishing step sizes in sgd.
\newblock 2018.
\newblock arXiv:1810.04723.

\bibitem[Wolf()]{wolf2018mathematical}
M.~M. Wolf.
\newblock Mathematical foundations of supervised learning.
\newblock
  \url{https://www-m5.ma.tum.de/foswiki/pub/M5/Allgemeines/MA4801_2018S/ML_notes_main.pdf}
  [Online; accessed 27-September-2019].

\bibitem[Sohrab(2003)]{Lipshitz}
H.~H. Sohrab.
\newblock \emph{Basic real analysis}, volume 231.
\newblock Birkh{\"a}user, Basel, 2003.
\newblock \doi{10.1007/978-1-4939-1841-6}.

\bibitem[Kingma and Ba(2014)]{Adam_opt}
D.~P. Kingma and J.~Ba.
\newblock Adam: A method for stochastic optimization.
\newblock \emph{arXiv:1412.6980}, 2014.

\bibitem[qra()]{qradient}
\url{https://github.com/frederikwilde/qradient}.

\bibitem[Schuch et~al.(2008)Schuch, Wolf, Verstraete, and Cirac]{Schuch_MPS}
N.~Schuch, M.~M. Wolf, F.~Verstraete, and J.~I. Cirac.
\newblock Entropy scaling and simulability by matrix product states.
\newblock \emph{Phys. Rev. Lett.}, 100:\penalty0 30504, January 2008.
\newblock \doi{10.1103/PhysRevLett.100.030504}.

\bibitem[Smith et~al.(2017)Smith, Kindermans, Ying, and Le]{smith2017don}
S.~L. Smith, P.-J. Kindermans, C.~Ying, and Q.~V. Le.
\newblock Don't decay the learning rate, increase the batch size.
\newblock \emph{arXiv:1711.00489}, 2017.

\bibitem[Korte and Vygen(2007)]{MaxCut}
B.~Korte and J.~Vygen.
\newblock \emph{Combinatorial Optimization: Theory and Algorithms}.
\newblock Springer Publishing Company, Incorporated, 4th edition, 2007.
\newblock ISBN 3540718435.

\bibitem[Zhou et~al.(2020)Zhou, Wang, Choi, Pichler, and
  Lukin]{zhou2018quantum}
L.~Zhou, S.-T. Wang, S.~Choi, H.~Pichler, and M.~D. Lukin.
\newblock Quantum approximate optimization algorithm: Performance, mechanism,
  and implementation on near-term devices.
\newblock \emph{Physical Review X}, 10\penalty0 (2):\penalty0 021067, 2020.
\newblock \doi{PhysRevX.10.021067}.

\bibitem[Schuld and Petruccione(2018)]{schuld2018supervised}
M.~Schuld and F.~Petruccione.
\newblock \emph{Supervised learning with quantum computers}, volume~17.
\newblock Springer, 2018.
\newblock \doi{10.1007/978-3-319-96424-9}.

\bibitem[Gentini et~al.(2019)Gentini, Cuccoli, Pirandola, Verrucchi, and
  Banchi]{gentini2019noise}
L.~Gentini, A.~Cuccoli, S.~Pirandola, P.~Verrucchi, and L.~Banchi.
\newblock Noise-assisted variational hybrid quantum-classical optimization.
\newblock \emph{arXiv:1912.06744}, 2019.

\bibitem[Conover(2018)]{GoogleQubits}
E.~Conover.
\newblock Google moves toward quantum supremacy with 72-qubit computer.
\newblock \emph{ScienceNews}, 193:\penalty0 13, 2018.

\bibitem[Vu(2017)]{IBMQubits}
C.~Vu.
\newblock {IBM announces advances to IBM quantum systems and ecosystem},
  November 2017.
\newblock IBM press release.

\bibitem[Kokail et~al.(2019)Kokail, Maier, van Bijnen, Brydges, Joshi,
  Jurcevic, Muschik, Silvi, Blatt, Roos, and Zoller]{InnsbruckQAOA}
C.~Kokail, C.~Maier, R.~van Bijnen, T.~Brydges, M.~K. Joshi, P.~Jurcevic, C.~A.
  Muschik, P.~Silvi, R.~Blatt, C.~F. Roos, and P.~Zoller.
\newblock Self-verifying variational quantum simulation of the lattice
  schwinger model.
\newblock \emph{Nature}, 569:\penalty0 355, 2019.
\newblock \doi{10.1038/s41586-019-1177-4}.

\bibitem[Bernien et~al.(2017)Bernien, Schwartz, Keesling, Levine, Omran,
  Pichler, Choi, Zibrov, Endres, Greiner, Vuletic, and
  Lukin]{ProbingQuantumSimulator}
H.~Bernien, S.~Schwartz, A.~Keesling, H.~Levine, A.~Omran, H.~Pichler, S.~Choi,
  A.~S. Zibrov, M.~Endres, M.~Greiner, V.~Vuletic, and M.~D. Lukin.
\newblock Probing many-body dynamics on a 51-atom quantum simulator.
\newblock \emph{Nature}, 551:\penalty0 579--584, 2017.
\newblock \doi{10.1038/nature24622}.

\bibitem[Zhang et~al.(2017)Zhang, Pagano, Hess, Kyprianidis, Becker, Kaplan,
  Gorshkov, Gong, and Monroe]{Monroe}
J.~Zhang, G.~Pagano, P.~W. Hess, A.~Kyprianidis, P.~Becker, H.~Kaplan, A.~V.
  Gorshkov, Z.-X. Gong, and C.~Monroe.
\newblock Observation of a many-body dynamical phase transition with a 53-qubit
  quantum simulator.
\newblock \emph{Nature}, 551:\penalty0 601--604, 2017.
\newblock \doi{10.1038/nature24654}.

\bibitem[Bremner et~al.(2016)Bremner, Montanaro, and
  Shepherd]{bremner_average-case_2016}
M.~J. Bremner, A.~Montanaro, and D.~J. Shepherd.
\newblock Average-case complexity versus approximate simulation of commuting
  quantum computations.
\newblock \emph{Phys. Rev. Lett.}, 117:\penalty0 80501, August 2016.
\newblock \doi{10.1103/PhysRevLett.117.080501}.

\bibitem[Aaronson and Arkhipov(2011)]{BosonSampling}
S.~Aaronson and A.~Arkhipov.
\newblock The computational complexity of linear optics.
\newblock In \emph{Proceedings of the forty-third annual ACM symposium on
  Theory of computing}, pages 333--342, 2011.
\newblock \doi{10.1145/1993636.1993682}.

\bibitem[Neill et~al.(2018)Neill, Roushan, Kechedzhi, Boixo, Isakov,
  Smelyanskiy, Barends, Burkett, Chen, and Chen]{neill_blueprint_2017}
C.~Neill, P.~Roushan, K.~Kechedzhi, S.~Boixo, S.~V. Isakov, V.~Smelyanskiy,
  R.~Barends, B.~Burkett, Y.~Chen, and Z.~Chen.
\newblock A blueprint for demonstrating quantum supremacy with superconducting
  qubits.
\newblock \emph{Science}, 360:\penalty0 195--199, 2018.
\newblock \doi{10.1126/science.aao4309}.

\bibitem[Bermejo-Vega et~al.(2018)Bermejo-Vega, Hangleiter, Schwarz,
  Raussendorf, and Eisert]{NewSupremacy}
J.~Bermejo-Vega, D.~Hangleiter, M.~Schwarz, R.~Raussendorf, and J.~Eisert.
\newblock Architectures for quantum simulation showing a quantum speedup.
\newblock \emph{Phys. Rev. X}, 8:\penalty0 21010, 2018.
\newblock \doi{10.1103/PhysRevX.8.021010}.

\bibitem[Gao et~al.(2017)Gao, Wang, and Duan]{GWD16}
X.~Gao, S.-T. Wang, and L.-M. Duan.
\newblock Quantum supremacy for simulating a translation-invariant ising spin
  model.
\newblock \emph{Phys. Rev. Lett.}, 118:\penalty0 40502, 2017.
\newblock \doi{10.1103/PhysRevLett.118.040502}.

\bibitem[Arute et~al.(2019)Arute, Arya, Babbush, Bacon, Bardin, Barends,
  Biswas, Boixo, Brandao, Buell, et~al.]{arute2019quantum}
F.~Arute, K.~Arya, R.~Babbush, D.~Bacon, J.~C. Bardin, R.~Barends, R.~Biswas,
  S.~Boixo, F.~G. Brandao, D.~A. Buell, et~al.
\newblock Quantum supremacy using a programmable superconducting processor.
\newblock \emph{Nature}, 574\penalty0 (7779):\penalty0 505--510, 2019.
\newblock \doi{10.1038/s41586-019-1666-5}.

\bibitem[con()]{conduct}
Scientific co2nduct.
\newblock online.
\newblock URL \url{https://scientific-conduct.github.io}.

\end{thebibliography}

\clearpage
\appendix
\section{Proof of Theorem \ref{thm:lips_cont1}\label{app:lipschitz_proof}}

For completeness, in this section we provide a proof of Theorem \ref{thm:lips_cont1} via a sequence of Lemmas. We begin with the following standard result for univariate functions, which follows directly from the mean value theorem, and whose proof can be found in~\cite{Lipshitz}:

\begin{lemma}[Lipschitz continuity of univariate functions on closed domains]\label{lem:lips_cont_univariate}
Given some function $f:\mathbb{R}\rightarrow \mathbb{R}$, if $f$ is continuously differentiable, then for any closed interval $[a,b] \subset \mathbb{R}$, the function $f:[a,b]\rightarrow \mathbb{R}$ is $L$-Lipschitz, with $L = \sup_{x \in [a,b]} |f'(x)|$.
\end{lemma}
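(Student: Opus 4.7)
The plan is to deduce the statement directly from the mean value theorem. Before invoking it, I would record two preparatory observations that make the argument go through cleanly: first, continuous differentiability of $f$ on $\mathbb{R}$ implies in particular that $f$ is continuous on $[a,b]$ and differentiable on the open interval $(a,b)$, so the hypotheses of the mean value theorem are satisfied on any sub-interval; second, $|f'|$ is continuous on the compact set $[a,b]$, so the extreme value theorem guarantees that the quantity
\[
L := \sup_{x \in [a,b]} |f'(x)|
\]
is finite (and in fact attained). Establishing $L < \infty$ is the only subtle point, since otherwise the bound to be proved would be vacuous or ill-defined.

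Next, for any pair of distinct points $x, y \in [a,b]$, I would apply the mean value theorem to $f$ restricted to the closed interval with endpoints $\min(x,y)$ and $\max(x,y)$. This yields some $\xi$ strictly between $x$ and $y$ with
\[
f(y) - f(x) = f'(\xi)\,(y - x).
\]
Taking absolute values and using $|f'(\xi)| \leq L$ immediately gives $|f(y) - f(x)| \leq L\,|y - x|$. The case $x = y$ is trivial, so this inequality holds for all $x,y \in [a,b]$, which is precisely the $L$-Lipschitz condition with the claimed constant.

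There is essentially no real obstacle here: the argument is a one-line application of the mean value theorem once finiteness of the supremum is justified via compactness. The only thing worth being careful about is distinguishing between Lipschitz continuity on $\mathbb{R}$ (which need not hold, e.g.\ for $f(x) = x^2$) and Lipschitz continuity on the restricted closed interval $[a,b]$ (which does hold, and is all that is claimed). This distinction is exactly what the restriction to a closed bounded domain in the statement is buying, and it is where the extreme value theorem enters.
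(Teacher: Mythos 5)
Your proof is correct and follows exactly the route the paper indicates: the paper gives no proof of its own, merely noting that the lemma ``follows directly from the mean value theorem'' and citing a reference. Your additional remark that the extreme value theorem guarantees finiteness of $L$ is a sensible (if standard) touch that the paper leaves implicit.
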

\noindent In order to continue, we introduce the following definition:
\begin{definition}\label{def:lc_wrt_j}
Given some function $f:[a,b]^M \rightarrow \mathbb{R}$, we say that $f$ is Lipschitz continuous with respect to its $j$'th argument, if for all $\vec{y} \in [a,b]^{M-1}$ the function $g_{j,\vec{y}}:[a,b] \rightarrow \mathbb{R}$ is Lipschitz continuous, where
\begin{equation}
    g_{j,\vec{y}}(x) := f(y_1,\ldots,y_{j-1},x,y_j,\ldots y_{M-1}).
\end{equation}
\end{definition}
\noindent Note that if $f$ is Lipschitz continuous with respect to its $j$'th argument, this implies that for all $x_1,x_2 \in [a,b]$ and for all $y \in [a,b]^{M-1}$ there exists some positive constant $L_{j,\vec{y}}$ such that
\begin{equation}
    |g_{j,\vec{y}}(x_2) - g_{j,\vec{y}}(x_2)| \leq L_{j,\vec{y}} |x_2 - x_1|.
\end{equation}
In light of this, we define $L_j := \sup_{\vec{y}} L_{j,\vec{y}}$, and it follows that for all $x_1,x_2 \in [a,b]$ and for all $y \in [a,b]^{M-1}$ 
\begin{equation}
|g_{j,\vec{y}}(x_2) - g_{j,\vec{y}}(x_2)| \leq L_j |x_2 - x_1|.
\end{equation}
Given, this we are then able to state the following Lemma, providing a sufficient condition for Lipschitz continuity of multivariate functions:
\begin{lemma}\label{lem:app_inter}
Given some $f:[a,b]^M \rightarrow \mathbb{R}$, if $f$ is Lipschitz continuous with respect to all of its arguments, then $f$ is $L$-Lipschitz continuous, with $L = \sqrt{M}(\max_{j}L_j)$.
\end{lemma}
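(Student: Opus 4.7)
The plan is a standard telescoping argument that reduces multivariate Lipschitz continuity to the per-coordinate hypothesis, together with a Cauchy--Schwarz step that introduces the $\sqrt{M}$ factor. Fix arbitrary $\vec{x},\vec{y}\in[a,b]^M$. I would first construct a finite sequence of interpolating points that changes one coordinate at a time, namely
\begin{equation}
\vec{z}^{(0)} := \vec{x},\qquad \vec{z}^{(j)} := (y_1,\ldots,y_j,x_{j+1},\ldots,x_M)\quad\text{for } j=1,\ldots,M,
\end{equation}
so that $\vec{z}^{(M)}=\vec{y}$ and consecutive points $\vec{z}^{(j-1)},\vec{z}^{(j)}$ differ only in the $j$-th coordinate. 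By the triangle inequality,
\begin{equation}
\bigl|f(\vec{y})-f(\vec{x})\bigr|\;\leq\;\sum_{j=1}^{M}\bigl|f(\vec{z}^{(j)})-f(\vec{z}^{(j-1)})\bigr|.
\end{equation}

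Next I would invoke the hypothesis of Lipschitz continuity with respect to each argument. Each term in the sum above is exactly the variation of the univariate slice $g_{j,\vec{w}^{(j)}}$ from Definition~\ref{def:lc_wrt_j}, where $\vec{w}^{(j)}\in[a,b]^{M-1}$ collects the remaining coordinates of $\vec{z}^{(j-1)}$ (which coincide with those of $\vec{z}^{(j)}$). Consequently,
\begin{equation}
\bigl|f(\vec{z}^{(j)})-f(\vec{z}^{(j-1)})\bigr|\;\leq\;L_{j,\vec{w}^{(j)}}\,|y_j-x_j|\;\leq\;L_j\,|y_j-x_j|,
\end{equation}
using the definition $L_j=\sup_{\vec{y}}L_{j,\vec{y}}$ already introduced in the excerpt. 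Summing these bounds and factoring out $\max_j L_j$ yields
\begin{equation}
\bigl|f(\vec{y})-f(\vec{x})\bigr|\;\leq\;\bigl(\max_j L_j\bigr)\sum_{j=1}^{M}|y_j-x_j|.
\end{equation}

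Finally, I would bound the $\ell^1$ norm $\sum_j|y_j-x_j|$ by the Euclidean norm $\nl\vec{y}-\vec{x}\nr_2$ using Cauchy--Schwarz applied to the vectors $(|y_j-x_j|)_{j=1}^M$ and $(1,\ldots,1)$, which gives $\sum_j|y_j-x_j|\leq\sqrt{M}\,\nl\vec{y}-\vec{x}\nr_2$. Combining this with the previous inequality produces the desired bound with $L=\sqrt{M}\,(\max_j L_j)$. There is no substantive obstacle here: the argument is elementary, and the only subtlety worth stating carefully is the verification that the intermediate points $\vec{z}^{(j)}$ lie in $[a,b]^M$ (which is immediate since each coordinate is either an $x_i$ or a $y_i$), so that the per-coordinate Lipschitz hypothesis applies on the whole telescoping path.
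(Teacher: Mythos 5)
Your proposal is correct and follows essentially the same route as the paper's proof: a telescoping decomposition changing one coordinate at a time, the triangle inequality, the per-argument Lipschitz bounds with $L_j$, and the norm inequality $\|\vec{x}-\vec{y}\|_1 \leq \sqrt{M}\,\|\vec{x}-\vec{y}\|_2$ (which the paper simply cites and you derive via Cauchy--Schwarz). Your added remark that the intermediate points remain in $[a,b]^M$ is a small but valid point of care that the paper leaves implicit.
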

\begin{proof}
For all $\vec{x},\vec{y} \in [a,b]^M$ we have that

\begin{align}
    f(\vec{x}) - f(\vec{y}) &= f(x_1,\ldots,x_M) - f(y_1,x_2,\ldots,x_M)  + \ldots\nonumber\\
    & \quad+ f(y_1,\ldots y_{j-1},x_j,\ldots,x_M) - f(y_1,\ldots y_j,x_{j+1},\ldots x_M) + \ldots \nonumber\\
    &\quad + f(y_1,\ldots y_{M-1},x_M) - f(y_1,\ldots y_M).
\end{align}
By the triangle inequality, this then gives us
\begin{align}
    |f(\vec{x}) - f(\vec{y})| &\leq |f(x_1,\ldots,x_M) - f(y_1,x_2,\ldots,x_M)|  + \ldots\nonumber\\
    & \quad+ |f(y_1,\ldots y_{j-1},x_j,\ldots,x_M) - f(y_1,\ldots y_j,x_{j+1},\ldots x_M)| + \ldots \nonumber\\
    &\quad + |f(y_1,\ldots y_{M-1},x_M) - f(y_1,\ldots y_M)|.
\end{align}
Furthermore, under the assumption that $f$ is Lipschitz continuous with respect to all of its arguments, we see that
\begin{align}
    |f(\vec{x}) - f(\vec{y})| &\leq L_1|x_1 - y_1| + \ldots + L_M|x_M - y_M| \\
    & \leq (\max_{j}L_j) (\sum_{j = 1}^M|x_j - y_j|)\\
    & = (\max_{j}L_j) ||\vec{x} - \vec{y}||_1\\
    & \leq \sqrt{M}(\max_{j}L_j)||\vec{x} - \vec{y}||_2,
\end{align}
where the last line follows from the fact that for $\vec{x} \in \mathbb{R}^M$ one has that $||\vec{x}||_1 \leq \sqrt{M}||\vec{x}||_2$.
\end{proof}\newpage
\noindent Putting together the previous statements we then obtain the following Lemma.
\begin{lemma}\label{final_lem}
Given some function $f:\mathbb{R}^M \rightarrow \mathbb{R}$, if all partial derivatives of $f$ are continuous, then for any $a,b \in \mathbb{R}$ the function $f:[a,b]^M \rightarrow \mathbb{R}$ is $L$-Lipschitz continuous with

\begin{equation}
    L = \sqrt{M}\left[\max_{j \in \{1,\ldots,M\}}\sup_{\vec{x} \in [a,b]^M}\left|\frac{\partial f(\vec{x})}{\partial x_j}\right|\right]
\end{equation}
\end{lemma}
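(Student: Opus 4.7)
The plan is to assemble the final lemma directly from the two building blocks established immediately above it: the univariate Lipschitz result (Lemma~\ref{lem:lips_cont_univariate}) and the multivariate combining lemma (Lemma~\ref{lem:app_inter}). The strategy is to verify, for each coordinate direction $j$, that $f$ is Lipschitz continuous with respect to its $j$'th argument in the sense of Definition~\ref{def:lc_wrt_j}, with an explicit constant controlled by the supremum of the corresponding partial derivative, and then to invoke Lemma~\ref{lem:app_inter} to aggregate these one-dimensional bounds into a Lipschitz bound for $f$ as a whole.

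First, I would fix an index $j \in \{1,\ldots,M\}$ and an arbitrary $\vec{y} \in [a,b]^{M-1}$, and consider the univariate slice $g_{j,\vec{y}}:[a,b]\to\mathbb{R}$ defined as in Definition~\ref{def:lc_wrt_j}. Because the hypothesis assumes that every partial derivative of $f$ is continuous, the chain rule for the trivial embedding gives $g'_{j,\vec{y}}(x) = \partial f/\partial x_j$ evaluated at the point obtained by inserting $x$ in the $j$'th slot of $\vec{y}$, and this derivative is continuous on $[a,b]$. Lemma~\ref{lem:lips_cont_univariate} then yields that $g_{j,\vec{y}}$ is Lipschitz with constant
\begin{equation}
L_{j,\vec{y}} = \sup_{x \in [a,b]} \left| g'_{j,\vec{y}}(x)\right| = \sup_{x\in[a,b]} \left|\frac{\partial f}{\partial x_j}(y_1,\ldots,y_{j-1},x,y_j,\ldots,y_{M-1})\right|.
\end{equation}

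Next, I would take the supremum over $\vec{y} \in [a,b]^{M-1}$ to obtain
\begin{equation}
L_j = \sup_{\vec{y}} L_{j,\vec{y}} \leq \sup_{\vec{x}\in[a,b]^M} \left|\frac{\partial f(\vec{x})}{\partial x_j}\right|,
\end{equation}
the key observation being that ranging over $\vec{y}$ and over $x$ jointly is equivalent to ranging over $\vec{x}\in [a,b]^M$. This shows that $f$ is Lipschitz continuous with respect to each of its $M$ arguments, so the hypotheses of Lemma~\ref{lem:app_inter} are satisfied. Applying that lemma with constants $L_j$ gives the Lipschitz constant $\sqrt{M}(\max_j L_j)$, and substituting the bound above on $L_j$ yields precisely the claimed $L = \sqrt{M}[\max_j \sup_{\vec{x}\in[a,b]^M}|\partial f(\vec{x})/\partial x_j|]$.

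There is no serious obstacle here; the work has essentially already been done in the preceding lemmas, and the only thing to be careful about is the interchange-of-suprema step, which is a standard identity for supremum over a Cartesian product and requires no further hypotheses. One cosmetic subtlety worth mentioning is that the suprema in the statement are implicitly assumed finite, which is automatic on the compact box $[a,b]^M$ given the continuity of the partial derivatives; this should be noted in passing to confirm that $L$ is a well-defined finite constant.
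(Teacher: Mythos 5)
Your proof is correct and follows essentially the same route as the paper's: apply Lemma~\ref{lem:lips_cont_univariate} to each univariate slice $g_{j,\vec{y}}$ to get $L_{j,\vec{y}}$ as the supremum of the corresponding partial derivative, identify $L_j = \sup_{\vec{y}} L_{j,\vec{y}}$ with the supremum over the full box, and conclude via Lemma~\ref{lem:app_inter}. Your added remarks on the interchange of suprema and on finiteness of $L$ by compactness are fine but not needed beyond what the paper records.
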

\begin{proof}
For all $j$, and for all $y\in [a,b]^{M-1}$, one can apply Lemma \ref{lem:lips_cont_univariate} to the function $g_{j,\vec{y}}$, from which one obtains that $g_{j,\vec{y}}$ is $L_{j,\vec{y}}$-Lipschitz continuous with 
\begin{equation}
    L_{j,\vec{y}} = \sup_{x \in [a,b]}\left|\frac{\partial f(y_1,\ldots y_{j-1},x,y_j,\ldots y_{M-1})}{\partial x_j} \right|.
\end{equation}
The statement then follows from Lemma \ref{lem:app_inter} and the definition of $L_j$.
\end{proof}
\noindent Finally, given Lemma \ref{final_lem} the proof of Theorem \ref{thm:lips_cont1} follows straightforwardly:
\begin{proof}[Proof (Theorem \ref{thm:lips_cont1})]
For all $j$ we have that the partial derivatives
    \begin{equation}
        \frac{\partial}{\partial \theta_j}f(\vt)=\langle \vec{0}|\left( \frac{\partial U^\dagger(\vt)}{\partial \theta_j}\right)OU(\vt)|\vec{0}\rangle+\langle \vec{0}|U^\dagger(\theta)O\left( \frac{\partial U(\vt)}{\partial \theta_i}\right)|\vec{0}\rangle
    \end{equation}
exist and are continuous on $\mathbb{R}^M$, and therefore the statement of the theorem follows from Lemma \ref{final_lem}.
\end{proof}

\section{Parameterized circuit and optimization details}\label{app:parameterized_circuits}
Fig.~\ref{fig:vqe_circuit} below shows the parameterized quantum circuit $U(\vt)$ used for both the VQE and MSE quantum classifier experiments, as discussed in Section \ref{s:results}. As can be seen, the circuit is constructed from layers of Pauli rotations, interleaved with $CNOT$ ladders, and consists of 400 free parameters.

\begin{figure}[H]
\begin{center}
      \includegraphics[]{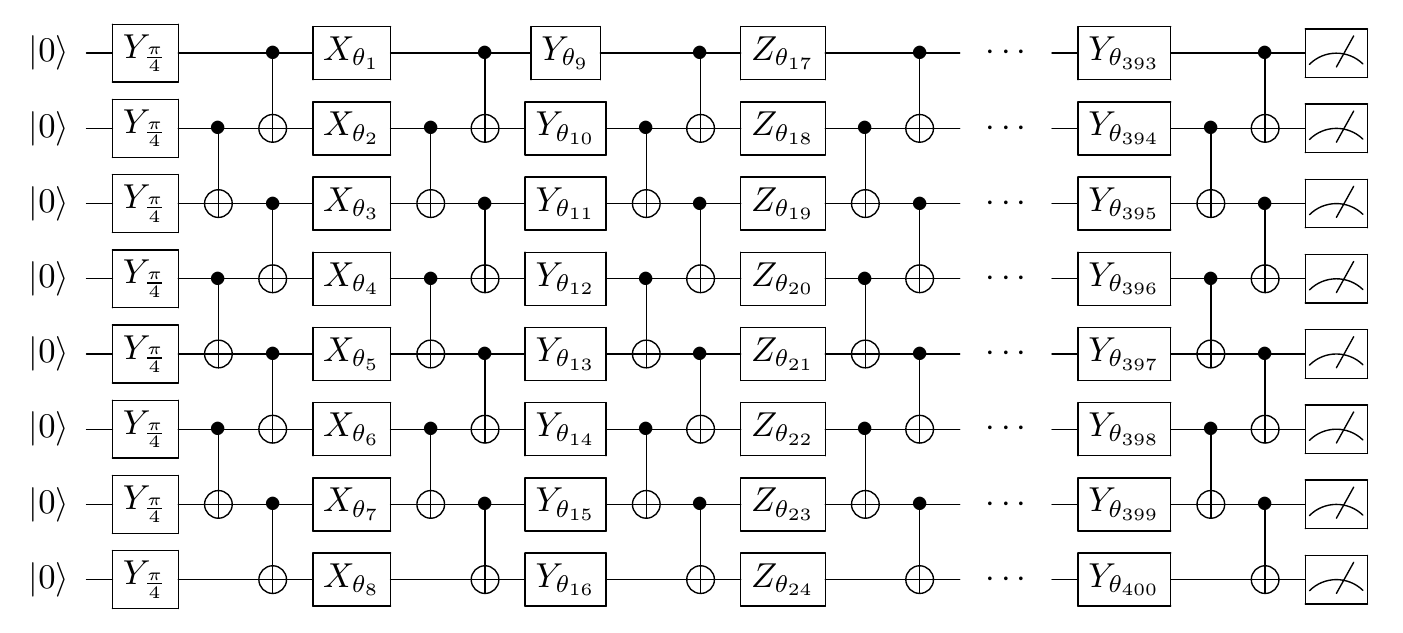}
\end{center}
  \caption{Parameterized quantum circuit used for the VQE and MSE classifier experiments described in Section \ref{s:results}. The notation $\sigma_\theta$ is used to represent the single qubit $e^{-i\theta\sigma}$ gate, for some Pauli operator $\sigma \in [X,Y,Z]$.}\label{fig:vqe_circuit}
\end{figure}
\newpage
\section{CO$_2$ Emission Table}

The table below summarizes the estimated carbon cost of this work, including both numerical simulations and air-travel for collaboration purposes. Estimations have been calculated using the examples of Scientific CO$_2$nduct \cite{conduct}, and are correct to the best of our knowledge.

\begin{table}[h]
\centering
\begin{tabular}[b]{l c}
\hline
\textbf{Numerical simulations} & \\
Total Kernel Hours [$\mathrm{h}$]& 14300\\
Thermal Design Power Per Kernel [$\mathrm{W}$]& 5.75\\
Total Energy Consumption Simulations [$\mathrm{kWh}$] & 85.1\\
Average Emission Of CO$_2$ In Germany [$\mathrm{kg/kWh}$]& 0.56\\
Total CO$_2$ Emission For Numerical Simulations [$\mathrm{kg}$] & 47.6\\
Were The Emissions Offset? & \textbf{Yes}\\
\hline
\textbf{Transport} & \\
\hline
Total CO$_2$ Emission For Transport [$\mathrm{kg}$] & 4804\\
Were The Emissions Offset? & \textbf{Yes}\\
\hline
Total CO$_2$ Emission [$\mathrm{kg}$] & 4847.5\\
\hline
\hline
\end{tabular}
\end{table}

\end{document}